\documentclass[a4paper,USenglish,cleveref,autoref,thm-restate]{lipics-v2021}
\hideLIPIcs
\title{Hardness of High-Dimensional Linear Classification}

\titlerunning{Hardness of High-Dimensional Linear Classification}

\author{Alexander Munteanu}{TU Dortmund, Germany}{alexander.munteanu@tu-dortmund.de}{}{supported by the German Research Foundation (DFG) - grant MU 4662/2-1 (535889065) and by the TU Dortmund - Center for Data Science \& Simulation (DoDaS).}
\author{Simon Omlor}{TU Dortmund, Germany}{simon.omlor@tu-dortmund.de}{}{supported by the German Research Foundation (DFG) - project no. 535889065.}
\author{Jeff M. Phillips}{University of Utah, UT, USA}{jeffp@cs.utah.edu}{}{thanks his support from NSF CCF-2115677, 2421782, and Simons Foundation MPS-AI-00010515.}

\authorrunning{A. Munteanu, S. Omlor, J.~M. Phillips}

\Copyright{Alexander Munteanu, Simon Omlor, Jeff~M. Phillips}

\ccsdesc[500]{Theory of computation~Computational geometry}

\keywords{Conditional Hardness, k-Sum, Affine Degeneracy, Halfspace Discrepancy, Classification}

\category{}

\relatedversion{}
\acknowledgements{We would like to thank Peyman Afshani for valuable discussion.}
\nolinenumbers

\usepackage[utf8]{inputenc}
\usepackage{lmodern}

\usepackage{mathtools}
\usepackage{amsmath,amsthm,amssymb}

\DeclareMathOperator*{\adj}{adj}

\usepackage{xspace}

\newtheorem*{fact}{Fact}

\newcommand{\R}{\mathbb{R}}
\newcommand{\Z}{\mathbb{Z}}
\renewcommand{\H}{\mathcal{H}}
\newcommand{\eps}{\varepsilon}
\newcommand{\MaxH}{{\rm \textsc{MaxHalfspace}}\xspace}
\newcommand{\sgn}{\mathrm{sign}}
\renewcommand{\det}{\ensuremath{\;\!\mathrm{det}}}
\newcommand{\poly}{\operatorname{poly}}

\begin{document}

\maketitle

\begin{abstract}
We establish new exponential in dimension lower bounds for the Maximum Halfspace Discrepancy problem, which models linear classification.
Both are fundamental problems in computational geometry and machine learning in their exact and approximate forms. 
However, only $O(n^d)$ and respectively $\tilde O(1/\varepsilon^d)$ upper bounds are known and complemented by polynomial lower bounds that do not support the exponential in dimension dependence.  We close this gap up to polylogarithmic terms by reduction from widely-believed hardness conjectures for Affine Degeneracy testing and $k$-Sum problems. Our reductions yield matching lower bounds of $\tilde\Omega(n^d)$ and respectively $\tilde\Omega(1/\varepsilon^d)$ based on Affine Degeneracy testing, and $\tilde\Omega(n^{d/2})$ and respectively $\tilde\Omega(1/\varepsilon^{d/2})$ conditioned on $k$-Sum.  
The first bound also holds unconditionally if the computational model is restricted to make \emph{sidedness} queries, which corresponds to a widely spread setting implemented and optimized in many contemporary algorithms and computing paradigms.
\end{abstract}

\allowdisplaybreaks

\section{Introduction}
\label{sec:intro}

\subparagraph{Linear Classification.}
One of the most fundamental tasks in machine learning is the minimum mislabeling linear classification problem.  We formulate it as follows.  Consider two point sets $R,B \subset \R^d$ (for red $R$ and blue $B$ points) so $|R\, \cup\, B| = n$.  Let $\H_d$ be the set of all halfspaces parameterized by $w,\xi \in \R^d \times \R$. So the \emph{halfspace} $h_{w,\xi} \in \H_d$ is the set of points $h_{w,\xi} := \{x \in \R^d \mid \langle w, x \rangle - \xi \geq 0\}$, where $\langle w, x \rangle = \sum_{j=1}^d w_j\cdot x_j$ denotes the standard Euclidean dot-product.   
Now given fixed input sets $R,B \subset \R^d$ and a halfspace $h \in \H_d$
define
\[
\phi(h) = \phi_{R,B}(h) = |h \cap R| - |h \cap B|. 
\]
as the number of red points in $h$, minus the number of blue points in $h$. Then the fraction of misclassifications may be written as
\[
\psi(h) = \frac{|R| - (|h \cap R| - |h \cap B|)}{|R \cup B|}
= 1 - \frac{|B|}{|R\cup B|} - \frac{1}{n}\phi(h).  
\]
So maximizing $\phi$ is equivalent to minimizing $\psi$.  For historical reasons we write the aim to minimize the misclassification rate $\psi(h)$ as the following maximization problem and its approximate version:
\begin{itemize}
    \item \MaxH:  For $R, B \subset \R^d$ with $|R \cup B| = n$, find $h^* = \arg\max_{h \in \H_d} \phi(h)$.
    \item $\eps$-\MaxH:  Consider $R, B \subset \R^d$ with $|R \cup B| = n$ with $h^* = \arg\max_{h \in \H_d} \phi(h)$.  For an error tolerance $\eps \in (0,1)$, find some $\hat h \in \H_d$ so that $\phi(h^*) - \phi(\hat h) \leq \eps n$.
\end{itemize}

While it is common in machine learning to replace $\psi$ with a convex loss function (such as logistic loss) which allows for a solution with gradient descent, the most classic formulations by Vapnik~\cite{vapnik-SLT,vapnik2015uniform}, which begat much of the foundations of statistical learning theory, have this combinatorial form and result in strong generalization guarantees.  

If $R \cup B$ is drawn i.i.d. from a distribution $\Pi$, then we can use $\psi(h)$ to predict how accurate $h$ would be on a new data point $x \sim \Pi$. In particular, for $\eps,\delta \in (0,1)$, if $|R \cup B| = \Theta(\eps^{-2}(d + \log(1/\delta)))$, then with probability at least $1-\delta$ over the sample, we get a guarantee that the probability that any $h$ correctly classifies a new point $x \sim \Pi$ is within $\eps$ to the finite sample fraction $\psi(h)$~\cite{vapnik2015uniform,li2001improved,ShalevSBD14}.
Thus maximizing $\phi(h)$ on a small sample gives a classifier which is within $\eps$ to the best possible generalization guarantee for new data.
Notably, this generalization rate guarantee is only accurate up to an additive $\eps \in (0,1)$. So solving for the exact optimum $h^* = \arg\max_{h \in \H_d} \phi(h)$ (as in \MaxH) may be unnecessary.  In such cases, it is sufficient to find some $\hat h$ satisfying $\eps$-\MaxH (and it can be randomized, succeeding with probability $1-\delta$), to achieve the same asymptotic guarantees.  

\subparagraph{Conditional Hardness.}
Conditional hardness results in computational geometry are often obtained by reduction from base problems that are believed to be hard. The main problems we reduce from are $k$-\textsc{Sum} and \textsc{AffineDegeneracy} testing. We discuss key variants and known results starting from the classic $3$-\textsc{Sum}~\cite{GajentaanO95}.

Consider the \underline{$k$-\textsc{Sum} decision problem}: Given $n\geq k$ integers $a_i \in \mathbb{Z}$, decide whether there exists a (multi)set $S\subseteq \{a_1,\ldots,a_n\}$ of $|S|=k$ integers that add up to $0$.

The problem can be solved in time $O(n^{\lceil k/2 \rceil})$ by comparing $k/2$-tuples from one half of the input to the other half, known as the \emph{meet in the middle} approach~\cite{horowitz1974computing}.
Despite polylogarithmic improvements for constant $k$ \cite{BaranDP08}, $\Omega(n^{\lceil k/2 \rceil})$ lower bounds hold against algorithms in certain restricted models of computation \cite{Erickson95,AilonC05} and in general the problem is conjectured to not admit $O(n^{\lceil k/2\rceil-c})$ time algorithms for any $c>0$, cf. \cite{AbboudL13}.

\begin{conjecture}[$k$-{Sum} conjecture]
For $k\geq 2$, there does not exist a randomized algorithm that succeeds (with high probability) in solving $k$-{\rm \textsc{Sum}} in time $O(n^{\lceil k/2\rceil-c})$ for any $c > 0$.
\end{conjecture}

Consider the \underline{\textsc{AffineDegeneracy} problem}: Given $n$ points in $\mathbb{Z}^d$, decide whether there exists a set of $(d+1)$ points that lie on a common affine hyperplane in $d$ dimensions.

If the algorithm is restricted to \emph{sidedness} queries that return whether a given point lies above, below, or on a given affine hyperplane, then Erickson and Seidel~\cite{EricksonS95} prove a $\Omega(n^d)$ lower bound matching the $O(n^d)$ complexity of any existing algorithm for the problem \cite{ChazelleGL85,EdelsbrunnerOS86,EdelsbrunnerG89,EdelsbrunnerSS93}. 
Recent research by Cardinal and Sharir~\cite{CardinalS25} has explored lower-order improvements in the exponent to $O(n^{k-1+o(1)})$ in the case of constant-degree polynomials in one dimension. Their result thus indicates that lifting the polynomial degree is slightly simpler than lifting to higher dimensions.
The general case of \textsc{AffineDegeneracy} in $d$ dimensions remains a major open problem in the real RAM model, for which it is conjectured that there exists no algorithm that runs in time $O(n^{d-c})$ for any $c>0$, cf. \cite{Har-PeledIM18,chan2022hopcroft,CardinalS25}.

\begin{conjecture}[{Affine Degeneracy} conjecture]\label{con:affinedegen}
There does not exist an algorithm that solves {\rm\textsc{AffineDegeneracy}} in $d$ dimensions in time $O(n^{d-c})$ for any $c > 0$.
\end{conjecture}

\subsection{Our Results}
We provide hardness results for exact and approximate versions of the \MaxH problem that hold conditionally on the aforementioned widely believed hardness conjectures in the real RAM model of computation and unconditionally in a more restricted computational model that counts only $\sgn(x^Ty)\in\{-1,0,+1\}$ queries, referred to as \emph{sidedness} query model or (ternary) $k$-linear decision tree model in the literature.
For an overview, we only state our bounds slightly informally for the case $d$ is a constant, and so for instance $2(d+1)^d$ is also a constant. The exact bounds are stated in the formal part in~\Cref{sec:formalpart}.

\subparagraph{\texorpdfstring{$k$}{k}-{Sum} Hardness.}
Our first reduction is from $k$-\textsc{Sum} to \textsc{MaxDiscrepancy}. This generalizes the previous $3$-\textsc{Sum} reduction \cite{GajentaanO95,MathenyP21} that gave quadratic lower bounds to an arbitrary $k\geq 2$ and hereby improves the exponents in the lower bounds to depend on the dimension.
Our reduction yields the following theorem:

\begin{theorem}[informal version of \Cref{thm:ksum}]
Assume that there exists no algorithm that solves $k$-{\rm \textsc{Sum}} on $n$ items within runtime $O(n^{\lceil k/2 \rceil - c})$ for any $c > 0$. Then there exist sets $R,B \subset \R^d, d=k-1$, each of size $n/2$, for which there is no algorithm that solves either 
\begin{itemize}
   \item \MaxH in time $O(n^{\lceil (d+1)/2 \rceil - c})$, or
   \item $\eps$-\MaxH in time $O(1/\eps^{\lceil (d+1)/2 \rceil - c} )$.
\end{itemize}
\end{theorem}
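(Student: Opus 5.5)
We reduce from $k$-\textsc{Sum} with $k=d+1$, generalizing the classical $3$-\textsc{Sum} to \MaxH reduction of \cite{GajentaanO95,MathenyP21}. Given integers $a_1,\dots,a_n$, we place each $a_i$ on the moment curve $\gamma(t)=(t,t^2,\dots,t^{d})$, or on a suitable degree-$d$ curve in $\R^d$. By Vandermonde-type identities, $d+1$ points $\gamma(t_1),\dots,\gamma(t_{d+1})$ lie on a common affine hyperplane iff a specific symmetric function of the $t_j$ vanishes. We therefore want a curve for which this condition is exactly $\sum_j t_j=0$.

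The concrete choice is $\gamma(t)=(t,t^2,\dots,t^{d-1},t^{d+1})$, i.e. the moment curve with degree $d$ skipped. The determinant of the $(d+2)\times(d+2)$ matrix with rows $(1,t_j,\dots,t_j^{d-1},t_j^{d+1})$ equals the Vandermonde determinant times the elementary symmetric polynomial $e_1(t_1,\dots,t_{d+1})=\sum_j t_j$. Hence, for distinct $t_j$, coplanarity is equivalent to the $k$-sum being $0$. Multisets with repeated elements can be handled by the standard trick of mapping $a_i\mapsto (d+1)a_i + \text{offset}$, or by working with $k$ disjoint copies. This gives a set $P$ of $n$ points in $\R^d$ in which some $d+1$ points are affinely dependent iff the $k$-\textsc{Sum} instance is a yes-instance. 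For any $d+1$ points, the value $\sum_j t_j$ controls the sign of the orientation determinant.

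Next comes the coloring gadget. Each point $p\in P$ is replaced by a red point $p$ and a few blue points placed very close, at distance $\delta$ along the curve or its normal directions, so that every halfspace captures at most a bounded net discrepancy per gadget. The gadgets are arranged so that capturing $+1$ from each of $d+1$ gadgets simultaneously is possible only if the hyperplane passes through all $d+1$ red points, that is, only if they are coplanar. For the $d=2$ case this is exactly the Matheny--Phillips construction, with a red point flanked by two blues on the parabola; we generalize it to the tangent and osculating directions of the curve.

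By the convexity of the curve (every hyperplane meets the moment-type curve in at most $d$ points, or $d+1$ counting the degeneracy), a halfspace can achieve full positive gain at only a bounded number of gadgets generically, and one additional gadget exactly in the degenerate case. This creates a gap: $\max\phi \ge c+1$ iff the instance is yes, where $c$ is the generic maximum. The integrality of the $a_i$, with bounded magnitude $\poly(n)$, gives a separation $\Omega(1/\poly(n))$ in the determinant, so $\delta$ can be chosen as an inverse polynomial and all coordinates stay polynomial-bit.

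For the approximate version, the gap of one point must become a gap of $\eps n$. We replicate each gadget point $m$ times (tiny perturbations or exact multiplicity), giving $N=nm$ points and a gap of $m$. Taking $m$ so that $m>\eps N$, i.e. $n\approx 1/\eps$, means an $O(1/\eps^{\lceil (d+1)/2\rceil-c})$ algorithm would solve $k$-\textsc{Sum} in $O(n^{\lceil k/2\rceil-c})$ time. The exact bound follows directly because the reduction is linear-time and size-preserving up to constants. The main obstacle will be the gadget step: proving rigorously that without exact coplanarity no halfspace collects the extra unit, uniformly in $d$. I will first attempt this via a sign-pattern argument. A hyperplane crossing the curve $d$ times (by a Descartes or Chebyshev system argument for $\{1,t,\dots,t^{d-1},t^{d+1}\}$) alternates sides, which bounds how many red-in, blue-out gadget patterns are achievable.
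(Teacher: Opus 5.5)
Your first step works and is a different route from the paper. For distinct $t_j$, the orientation matrix with rows $(1,t_j,\dots,t_j^{d-1},t_j^{d+1})$ has determinant equal to the Vandermonde times $\sum_j t_j$; note it is $(d+1)\times(d+1)$, not $(d+2)\times(d+2)$. Taking $k$ offset copies handles repeated summands, whereas a single map $a_i\mapsto(d+1)a_i+o$ does not.

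\textbf{The gap is the colouring gadget.} You leave this step open yourself, and the versions you sketch cannot work.
\begin{itemize}
\item \emph{Blues flanking the red along the curve.} Suppose the red point at parameter $t_i$ is flanked by blue points at parameters $t_i\pm\delta$. A $+1$ at that gadget forces $f(t)=\langle w,(t,\dots,t^{d-1},t^{d+1})\rangle-\xi$ to be $\ge 0$ at $t_i$ and $<0$ at $t_i\pm\delta$. Hence $f$ has two roots (with multiplicity) in $(t_i-\delta,t_i+\delta)$. Since $f$ has degree at most $d+1$, at most $\lfloor (d+1)/2\rfloor$ gadgets can score. So even yes-instances never reach $d+1$, and your planned sign-pattern argument would prove exactly this.
\item \emph{Blues at $p\pm\delta u$ for a tangent or normal direction $u$.} Here $p$ is the midpoint of the two blue points, and the complement of $h$ is convex. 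So no gadget can ever score $+1$.
\end{itemize}
More fundamentally, ``a hyperplane meets the curve in at most $d$ points generically'' is about exact incidences. The no-instance direction must instead exclude a hyperplane that passes merely \emph{near} $d+1$ points. Your one sentence on integrality of the determinant points the right way, but it is never turned into an argument.

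\textbf{How to close it.} Use a single red/blue pair $p\pm\delta e_d$ per input point. Shifting along $e_d$ is safe: a hyperplane through $d+1$ distinct curve points must have $w_d\neq 0$, or else $f$ would have degree at most $d-1$ and $d+1$ roots.
\begin{itemize}
\item Each pair contributes at most $+1$, and a $+1$ means the bounding hyperplane meets the segment between the two points.
\item Suppose this happens for $d+1$ pairs. The orientation matrix of the meeting points differs from the integer orientation matrix of the originals only in its last column, by at most $\delta$ per entry.
\item By linearity in that column, the absolute value of the originals' integer determinant is at most $\delta$ times the sum of the absolute values of the $d+1$ integer cofactors of that column.
\item By Hadamard's bound this is $<1$ once $\delta$ is below an explicit threshold in $d$ and $\max_i|t_i|$. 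So the determinant vanishes and $\sum_j t_j=0$.
\end{itemize}
This is essentially the argument of \Cref{thm:affinedegeneracy} applied to your point set. With this gadget your route becomes a valid composition $k$-\textsc{Sum}$\,\to\,$\textsc{AffineDegeneracy}$\,\to\,$\MaxH.

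\textbf{Comparison with the paper.} The paper's proof is more direct.
\begin{itemize}
\item The $k$ summands live on $k$ parallel lines in direction $e_1$, with the red/blue shift along these lines, so each line contributes at most $1$.
\item A score of $k$ pins one crossing point $v_j$ per line within $\gamma$ of some $z_{i_j,j}$.
\item The exact relation $-(2d-3)v_0+2\sum_{j=1}^{d-1}v_j-v_d=0$ plus the triangle inequality gives $|a_{i_1}+\dots+a_{i_k}|\le 4d\gamma<1$.
\end{itemize}
This needs no curve and no determinant bounds. It uses a constant $\gamma<1/(4d)$ independent of the $|a_i|$, whereas your $\delta$ must be an inverse power of $\max_i|t_i|$. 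It also handles repeated summands for free.

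Finally, your replication step for $\eps$-\MaxH is harmless but unnecessary. Since $N$ grows proportionally to $m$, the condition $\eps N<m$ amounts to $\eps$ below roughly $1/n$ for every $m$. So one simply takes $m=1$ and $\eps$ below one over the number of points, as the paper does.
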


\subparagraph{Affine Degeneracy Testing Hardness.}
Our second reduction is from \textsc{AffineDegeneracy} testing to \textsc{MaxDiscrepancy} and yields the following theorems:

\begin{theorem}[informal version of \Cref{thm:affinedegeneracy}]
Assume that there exists no algorithm for {\rm \textsc{AffineDegeneracy}} on $n$ points with runtime $O(n^{d-c})$ for any $c >0$. Then there exist sets $R,B \subset \R^d$, each of size $n/2$, for which there is no algorithm that solves either 
\begin{itemize}
    \item \MaxH in time $O(n^{d-c})$, or
    \item $\eps$-\MaxH in time $O(1/\eps^{d-c})$.
\end{itemize}
\end{theorem}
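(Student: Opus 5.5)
We reduce \textsc{AffineDegeneracy} to \MaxH, taking $P\subset\Z^d$ with $|P|=n$ as the instance. The idea is to replace each input point $p\in P$ by a small gadget of red and blue points around $p$. The gadget is chosen so that any halfspace meets it with total weight $\phi$ equal to $0$, unless the bounding hyperplane passes through (or extremely close to) $p$ in a suitable orientation, in which case it earns a bonus of $+1$.

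A concrete choice for the gadget is a red point at $p$ together with $d$ blue points placed at $p\pm\delta v_j$, where the directions $v_j$ are arranged so that every halfspace whose boundary is far from $p$ contains either all of them or none. For example, use a red point $p$ with weight $2$ (two coincident copies) and blue points $p+\delta u$, $p-\delta u$ for a fixed generic direction $u$. A halfspace far from $p$ then contains all four points or none, giving $\phi\in\{0\}$; both ``all'' and ``none'' give $0$ because $2-2=0$. A halfspace whose boundary passes through $p$ contains $p$ (closed halfspace) and exactly one of $p\pm\delta u$ (for generic orientation), giving $\phi=2-1=1$. Boundaries cutting between $p$ and $p\pm\delta u$ without containing $p$ give $\phi\le 0$.

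With this gadget, $\max_h\phi(h)$ equals the maximum number of input points through which a hyperplane can pass, up to the $\delta$-perturbation. A hyperplane can always pass through $d$ points of $P$, so the optimum is at least $d$, and it is at least $d+1$ if and only if $P$ is affinely degenerate. The sizes are $|R|=2n$ and $|B|=2n$; these can be balanced to $n/2$ each by rescaling $n$ and padding.

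The main obstacle is the perturbation argument. We must show that if no $d+1$ points of $P$ are cohyperplanar, then no hyperplane comes within distance $\delta$ of $d+1$ points. This follows from integrality: the relevant determinants are nonzero integers, so they are at least $1$ in absolute value, while the coordinates are bounded by $U$. Taking $\delta<1/(d!\,\poly(U)^d)$ then suffices, and this gives the $2(d+1)^d$-type constants mentioned in the paper. We also need to handle halfspaces whose boundary contains $u$-parallel directions, which we do by picking $u$ generic, e.g.\ with rational coordinates of large distinct magnitudes. Any algorithm for \MaxH running in time $O(n^{d-c})$ would then decide \textsc{AffineDegeneracy} in time $O(n^{d-c})$, contradicting \Cref{con:affinedegen}.

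For $\eps$-\MaxH, set $\eps=1/(4n+1)$. Then $\eps\cdot|R\cup B|<1$, so an $\eps$-approximate solution is exact on this integer-valued instance, and time $O(1/\eps^{d-c})=O(n^{d-c})$ again contradicts the conjecture.
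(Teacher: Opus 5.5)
Your reduction is correct. It follows the paper's overall plan (replace each input point by a tiny red/blue gadget, threshold at $d+1$, use integrality), but the gadget and, more substantially, the key lemma are different.

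\textbf{The gadget.} The paper uses a single pair, red $x_i+\gamma e_1$ and blue $x_i-\gamma e_1$. This works because the sign of $\langle w,e_1\rangle$ is the same for every pair. Your doubled red point makes the gadget orientation-free, but coincident copies are not allowed, since $R\subset\R^d$ is a set. The simple fix is to use red $p+\delta u$ and blue $p-\delta u$ with your generic $u$.

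The generic $u$ is a real improvement. With the fixed direction $e_1$, take a degenerate tuple whose hyperplane contains $e_1$, e.g.\ $(0,0),(1,0),(2,0)$ for $d=2$. All its shifted copies then lie on that hyperplane, and discrepancy $d+1$ is not attained. The paper's forward direction passes over this case.

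\textbf{The key lemma.} For the converse, the paper fixes $d$ of the $d+1$ crossed points and writes the perturbed matrix as the rank-one update $X+\gamma\lambda e_1^T$. It then uses Sherman--Morrison with Hadamard-bounded cofactors to show that $\langle b,x_{d+1}\rangle-1$ is a multiple of $1/|\det X|$ yet smaller than it in absolute value.

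You instead claim directly that $d+1$ affinely independent points of $\Z^d\cap[-U,U]^d$ cannot all lie within distance $\eta$ of one hyperplane once $\eta$ is below an explicit bound. This is shorter and symmetric in the $d+1$ points. It also avoids the paper's w.l.o.g.\ assumptions (that $x_1,\dots,x_d$ are linearly independent and that the hyperplane misses the origin).

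However, you only assert this claim, so you should write it out. Let $M$ be the matrix with rows $(1,x_i^T)$; then $\det M$ is a nonzero integer. A hyperplane $\langle w,x\rangle=\xi$ with $\|w\|_2=1$ and residuals $r_i=\langle w,x_i\rangle-\xi$ satisfies $M(-\xi,w^T)^T=r$. Hence $1\le\|(-\xi,w^T)\|_2=\|\adj(M)\,r\|_2/|\det M|\le\|\adj(M)\,r\|_2$. Hadamard's bound on the cofactors then gives $\max_i|r_i|\ge 1/((d+1)^{3/2}d^{d/2}U^{d})$ for $U\ge 1$.

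\textbf{Two details to handle.} First, the relevant distance is $\delta\|u\|_2$, so $\delta$ must also absorb the large entries of your generic $u$. Second, in the forward direction, take the hyperplane through the degenerate tuple with an integer normal $w$. Then input points off the hyperplane are at distance at least $1/\|w\|_2$, their gadgets are not cut, and they contribute $0$.

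Your $\eps$ reduction is fine as stated.
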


\begin{theorem}
There exist sets $R, B \subset \R^d$, each of size $n/2$, for which any algorithm that uses only sidedness queries requires
\begin{itemize}
    \item $\Omega(n^d)$ such queries to solve \MaxH, and 
    \item $\Omega(1/\eps^d)$ such queries to solve $\eps$-\MaxH.  
\end{itemize}
\end{theorem}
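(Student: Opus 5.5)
The plan is to prove the unconditional statement by reduction from \textsc{AffineDegeneracy}, using the Erickson--Seidel lower bound~\cite{EricksonS95}: any algorithm using only sidedness queries needs $\Omega(n^d)$ queries to decide \textsc{AffineDegeneracy} in $\R^d$. The reduction must itself use only sidedness queries and $O(n)$ of them. It should also map an instance of $n$ points to a \MaxH instance of $O(n)$ points, so that query-complexity lower bounds transfer. This is the same reduction behind the informal version of \Cref{thm:affinedegeneracy}; here we only check that it is ``sidedness-preserving.''

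\textbf{Construction.} Given $P=\{p_1,\dots,p_n\}\subset\R^d$, replace each $p_i$ by a tiny gadget: a red point at $p_i$ together with blue points in a small neighborhood around it. The gadget is designed so that a halfspace gains net $+1$ from the gadget exactly when its boundary passes through $p_i$ with the red point on the closed side. A generic halfspace then has $\phi$ bounded by some value $t$. If $d+1$ points of $P$ lie on a common hyperplane, that hyperplane collects $d+1$ reds in the limit, which pushes $\phi$ above the generic maximum $\tau$; otherwise, only $d$ points can be hit simultaneously.

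One concrete choice: place two blue points at $p_i\pm\delta v$ along a random direction $v$, with $\delta$ below the minimum nonzero ``distance to degeneracy.'' Then compare the optimum against the threshold value attainable by any hyperplane through $d$ points. Deciding degeneracy then reduces to comparing $\max\phi$ with $\tau$, which is a single call to \MaxH on $n'=\Theta(n)$ points. Balancing $|R|=|B|=n'/2$ is achieved by padding with far-away, irrelevant points.

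\textbf{Query model.} Any sidedness query the \MaxH algorithm makes, $\sgn$ of a point of $R\cup B$ against a hyperplane spanned by $d$ other input points, must be simulable by a constant number of sidedness queries on the original $P$. This is where I expect the main obstacle. Perturbed gadget points are not input points of the degeneracy instance, so I would make the perturbation symbolic: take $\delta\to 0$ infinitesimally, as in simulation of simplicity. The sign of an orientation determinant involving perturbed points is then determined lexicographically by the sign of the unperturbed determinant, which is a sidedness query on $P$, followed by fixed tie-breaking terms. This gives an $O(1)$-query simulation, and hence $\Omega(n^d)$ for \MaxH.

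\textbf{Approximate version.} Set $n=\Theta(1/\eps)$, with the gadget chosen so that the gap between the degenerate and nondegenerate optimum values of $\phi$ is at least $1$. Choosing $\eps n<1$ forces any $\eps$-\MaxH solution to be exact, so it distinguishes the two cases. This yields $\Omega(n^d)=\Omega(1/\eps^d)$ sidedness queries.
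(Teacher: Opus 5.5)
Your high-level plan is the paper's: reduce from \textsc{AffineDegeneracy} (the reduction of Theorem~\ref{thm:affinedegeneracy}), invoke Erickson--Seidel, and take $\eps n<1$ for the approximate case. But the construction you give does not work.

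\textbf{The gadget and padding fail.} With a red point at $p_i$ and blue points at $p_i\pm\delta v$, the red point lies in $h$ only together with at least one blue point, in every case. A boundary through $p_i$ transversal to $v$ puts exactly one blue point into $h$. A boundary crossing the open segment from $p_i$ to $p_i+\delta v$ leaves $p_i$ on the same side as $p_i-\delta v$. All other cases give $0$ or $-1$. So each gadget contributes at most $0$, the hyperplane through $d+1$ degenerate points collects $d+1$ blue points along with the $d+1$ red ones, and $\max_h\phi$ carries no information.

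Worse, no gadget can achieve your stated goal ``net $+1$ exactly when the boundary passes through $p_i$''. Shifting the boundary of a closed halfspace outward by a tiny amount leaves $h\cap(R\cup B)$ unchanged and makes the boundary avoid all points. So $\phi$ can only reward \emph{separating} points, i.e.\ passing through a small region near $p_i$.

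The padding also fails: unbalanced far-away points are not irrelevant, since a halfspace containing only them collects all of them. The paper's gadget, $x_i+\gamma e_1\in R$ and $x_i-\gamma e_1\in B$, is balanced by construction. Each pair contributes $-1$, $0$ or $+1$, with $+1$ iff the boundary separates the pair with the red point in $h$.

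\textbf{The converse direction is missing.} With a working gadget, the real content is the converse: a hyperplane crossing $d+1$ gaps of width $2\gamma$ must force the original $d+1$ points to be affinely dependent. You only assert this, by choosing $\delta$ ``below the minimum nonzero distance to degeneracy''. This is exactly the core of the paper's proof. It shows $\langle b,x_{d+1}\rangle-1\in\frac{1}{|\det(X)|}\Z$ via the integral adjugate. It then shows $|\langle b,x_{d+1}\rangle-\langle b',x'_{d+1}\rangle|<1/|\det(X)|$, using Sherman--Morrison for $X'=X+\gamma\lambda e_1^T$, Hadamard's bound, and $\gamma=(6d^{d+2}N^{2d})^{-1}$. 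For a symbolic $\delta$, a compactness argument over $\delta\to 0$ would also do, but some argument is required.

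\textbf{The query simulation is also wrong as stated.} When the unperturbed orientation vanishes, the sign of the perturbed determinant is the sign of $\sum_j s_jD_j$, where $D_j$ replaces row $j$ by $(0,v^T)$. This is input-dependent, not a fixed tie-break, and in general it is not a sidedness query on $P$. For instance, take collinear points $p_j=q+t_ju$ ($j\in\{a,b,c\}$) in the plane, shifted by $+\delta v$, $0$, $-\delta v$, which your three-level gadget allows. The sign is that of $(t_a-2t_b+t_c)\det[u,v]$. It records on which side of the midpoint of $p_ap_c$ the point $p_b$ lies, which sidedness queries on $P$ do not determine in general. The paper itself does not spell out this simulation; it simply combines its reduction with the Erickson--Seidel bound.
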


We remark that lower bounds of the form $1/\eps^d$ do not rule out $2^{1/\eps}\poly(d)$ algorithms.
The latter two results are of special importance since either under the hardness conjecture or in the sidedness query model they match the best exact and approximate algorithms for this problem up to lower order polylogarithmic terms. The exact version, \MaxH, has a $O(n^d)$ complexity \cite{DobkinE93,DobkinEM96}, and for the approximate version, $\eps$-\MaxH, there is an $O(\frac{1}{\eps^d}\log^4 \frac{1}{\eps})$ time algorithm (after preprocessing the input in $O(n)$ time) in \cite{MathenyP21}. Notably, this is based on a range counting data structure that can be implemented to work in the sidedness query model as well.
The sidedness query model may seem restrictive at first glance and indeed it does not rule out simple comparisons between values (rather than signs) of dot products and determinants which may be exploited to break such lower bounds.

However, we stress that basically all known algorithms rely exactly on that type of subroutines, see related work below, and especially the discussion in~\cite{EricksonS95}.
Now, 30 years later, the core sidedness query operation is precisely the heavily optimized $\sgn(x^Ty)$ operation which can be vectorized, optimized and accelerated on GPUs. Additionally, the study of big data models have brought up a paradigm that discretizes and reduces the search space to a few candidate solutions and then brute forces that subset using simple queries as in \cite{MathenyP21}. This underlines that the sidedness query model is even more relevant in today's age of AI/ML and big data, than it probably was when Erickson and Seidel~\cite{EricksonS95} studied it.

The hardness conjecture of \textsc{AffineDegeneracy} extends the same lower bound to the real RAM model in any dimension. In the Turing model, our reduction holds only in any constant dimension, due to the bit complexity of representing and adjusting small scalars.

\subsubsection{Technical Overview}

\paragraph*{Reduction from $k$-{Sum}}

Given an input $a_1,\ldots,a_n \in\Z$ for the $k$-\textsc{Sum} decision problem, the idea behind our construction is as follows: we construct for any $i\in [n]$ and $j\in[k]$ a point $x_{i,j} \in \R^{k-1}$. The points are constructed in such a way, that\footnote{up to technical scalars that are specified in the formal proof, see \Cref{sec:ksumhardness}.} the first coordinate of each $x_{i,j}$ corresponds to the input $a_i$, and for $j \in 
[1,\ldots, d-1]$, the remaining coordinates are the standard basis vectors $e_j$, whose coordinates are all $0$ except for coordinate $j$ which is $1$.
The remaining vectors are $x_{i,0}$ and $x_{i,d}$.  Both have their first coordinate set to $-a_i$ (times a constant), then $x_{i,0}$ has all remaining coordinates set to $0$, and $x_{i,d}$ has all remaining coordinates are set to $1$. 

The idea behind this construction is that a solution $a_1,\ldots,a_k$ to $k$-\textsc{Sum} corresponds to a hyperplane that crosses a specific selection of points, including an ancillary point that acts as an intercept term. Hence, there exists a linear combination of these points such that their first coordinates resemble the equation $a_1 + \ldots + a_{k-1} = -a_k$, while the other coordinates enforce constraints that ensure we make the right geometric selection of points, such that on these coordinates we have $e_1 + e_2 + \ldots + e_{k-1} = \boldsymbol{1}$, i.e., the selected unit basis vectors sum up to the vector $\boldsymbol{1}$ that comprises only $1$s. By construction, a linear combination of these points whose coefficients parametrize a hyperplane also implies the existence of a $k$-\textsc{Sum} solution.

Now, we reduce to the \MaxH problem. To this end, we copy each of the constructed points twice to create $nk$ red points $r_{i,j}=x_{i,j}+\gamma e_1\in R$, and another $nk$ blue points $b_{i,j}=x_{i,j}-\gamma e_1\in B$. That means the two colored point sets to be separated are simply copies of the previously constructed points but shifted by a tiny amount $\gamma$ in opposite directions along their first coordinate. Note that this affects only the coordinate that corresponds to the original $k$-\textsc{Sum} input, not the additional gadgets.

Now if $||h\cap R|-|h\cap B||\geq k$ then there exists a hyperplane that passes through the tiny gaps between at least $k$ pairs of points $(r_{i,j}, b_{i,j})$. Using the constraints imposed by the gadgets this enforces a certain parameterization of the hyperplane. Applying this parameterization only to the first coordinates of the points and choosing $\gamma \leq 1/k$, we can bound $|a_1+\ldots+a_k| \leq (k-1)\gamma < 1$. Since all $a_{i}$ are integers, it follows that $a_{1} + \ldots + a_{k} = 0$.

For the other direction $a_{1} + \ldots + a_{k-1} = - a_{k}$ implies, as explained above, that there exists a linear combination of $k$ original points $x_{i,j}$ that equals zero. Now note that the shifted versions $r_{i,j}, b_{i,j}$ of these $k$ points each lie above (respectively below) the hyperplane, so each of them contributes $1$ to the cost. Other points can only increase this number above $k$, or their red and blue contributions cancel in case they both lie on the same side. It thus follows that $||h\cap R|-|h\cap B||\geq k$.

\paragraph*{Reduction from {AffineDegeneracy}}
This construction is more direct: given $n$ points $x_i\in\Z^d$ as input for the \textsc{AffineDegeneracy} testing problem, we simply create $n$ red points $r_i=x_i+\gamma e_1\in R$, and another $n$ blue points $b_i=x_i-\gamma e_1\in B$. The two colored point sets to be separated are simply copies of the original points shifted by a tiny amount $\gamma$ in opposite directions along their first coordinate.

If there exist $d+1$ points that lie on a common affine hyperplane, then the number of red points minus the number of blue points of the transformed input that lie in the halfspace $h$ is clearly at least $||h\cap R|-|h\cap B||\geq d+1$. To see this, note that for any of the original points on the hyperplane, we have one red point that lies above, and one blue point that lies below. Other points can only increase this number, or their red and blue contributions cancel if both lie on the same side.

The other direction is considerably more sophisticated and technical to prove: if $||h\cap R|-|h\cap B||\geq d+1$ then there exist at least $d+1$ pairs $(r_i, b_i)$ of corresponding red and blue points, such that the hyperplane passes through some points $x'_i$ located in the tiny gap of size $2\gamma$ between $r_i$ and $b_i$. Relabel these points $x'_1,\ldots,x'_{d+1}$. Our claim is that the corresponding $d+1$ original points $x_1,\ldots,x_{d+1}$ lie on a common affine hyperplane. Since any collection of $d$ points such as $x_1,\ldots,x_{d}$ lie on a common hyperplane, the task reduces to show that $x_{d+1}$ also lies on the same hyperplane.

Reframing this in terms of linear algebra, we collect the vectors $x'_1,\ldots,x'_{d}$ into a matrix $X'\in\R^{d\times d}$, and similarly $x_1,\ldots,x_{d}$ into $X\in\Z^{d\times d}$. Let $\mathbf{1}\in\R^d$ be the vector whose coordinates are all equal to one. We can now calculate the vector of coefficients of the hyperplane as $b' = X'^{-1}\mathbf{1}$. And for this instance, we know that $\langle b', x'_{d+1}\rangle-1 = 0$. Similarly, we get for the original points $b = X^{-1}\mathbf{1}$, and our task is to prove that $\langle b, x_{d+1}\rangle-1 = 0$ holds as well. To this end, we prove the following two claims:
\begin{enumerate}
    \item[(i)] we have that $\langle b,x_{d+1} \rangle- 1$ is an integer multiple of $1/N_0$, for some integer $N_0\neq 0$, and
    \item[(ii)] it holds that
    \[
        |\langle b,x_{d+1} \rangle- 1|=|\langle b,x_{d+1} \rangle- \langle b',x'_{d+1} \rangle + \langle b',x'_{d+1} \rangle - 1 | =|\langle b,x_{d+1} \rangle-\langle b',x'_{d+1} \rangle| < 1/N_0.
    \]
\end{enumerate}
    Combining claims (i) and (ii) clearly implies our goal $\langle b,x_{d+1} \rangle-1=0$.

At this point of the proof, especially bounding $|\langle b,x_{d+1} \rangle-\langle b',x'_{d+1} \rangle| < 1/N_0$ becomes very technical and requires bounding the effect of  the remainder of matrix inverses $(X^{-1} - X'^{-1})$ along the shift direction $e_1$, as well as along $x_{d+1}$. Crucial steps involve showing that the determinants of $X$ and its minors must be non-zero integers. Then we note that $X'$ can be seen as a rank-one update of $X$, and apply the Sherman-Morrison formula to express the required remainder term in closed form. We further use Hadamard's bound on the determinants of minors of $X$ in terms of the largest integer that appears in $X$. Choosing the shift size parameter $\gamma$ small enough to cancel all but a factor $1/\det(X)$ yields the bounds that we require to conclude the proof. We refer the reader to \Cref{sec:affinedegenhardness} for the full details.

\subsection{Implications Beyond Machine Learning}   
A number of other settings consider range space maximization, where the key unit of measure is the fraction of red points $\mu_R(h) = \frac{|h \cap R|}{|R|}$ and of blue points $\mu_B(h) = \frac{|h \cap B|}{|B|}$ in a halfspace.  

\begin{itemize}
   \item \textbf{discrepancy:} $\phi_\|(h) = |\mu_R(h) - \mu_B(h)|$    \\ 
      This relates to discrepancy theory~\cite{matousek1999geometric,chazelle2000discrepancy,bansal2010constructive,ahmed2024communication} where one often seeks to find a coloring of a set $X \subset \R^d$ so each $x \in X$ is assigned to $R$ or to $B$, and so $\max_{h \in \H}\phi_\|(h)$ is minimized over any choice of $R,B$.  Then taking $R,B$ as input, the optimization $\max_{h \in \mathcal{H}}\phi_{\|}(h)$ amounts to evaluating how well the chosen coloring works.  In particular, many of the algorithms are Monte Carlo randomized, so to transform them into a Las Vegas style algorithm this verification step would be required.  Moreover, iteratively applying discrepancy leads to coresets, which achieve additive $\eps |X|$ error on $\phi$ or range counting queries; hence the approximate version is often sufficient.  

      In minimizing discrepancy, one often seeks $|R|=|B|$, in which case $\phi_\|(h) = \frac{2}{n} | \phi(h) |$.  Further note that the absolute value $| \cdot |$ takes the maximum value between $\phi(h)$ and when reversing the roles of $R$ and $B$. Our constructions will be symmetric in $R$ and $B$ with $|R|=|B|$. So our hardness results will hold for the most common case in discrepancy maximization in high dimensions.

    \item \textbf{Poisson:} $\phi_P(h) = \mu_R(h) \log \frac{\mu_R(h)}{\mu_B(h)} + (1-\mu_R(h)) \log \frac{1-\mu_R(h)}{1-\mu_B(h)}$ \\
       This relates to the spatial scan statistic model under a Poisson point process, as described by \cite{kulldorff1997spatial}.  The maximizing $h \in \H$ corresponds to the most anomalous region under a statistical hypothesis test where the null hypothesis $\texttt{H}_0$ models that there is a consistent Poisson process for all of the data, and the $\texttt{H}_1$ hypothesis models that the Poisson rate is different in $h$ than outside $h$. Then $\phi_P$ is the negative log-likelihood ratio of these two scenarios, also known as the KL divergence between the two distributions $(\mu_R(h), 1-\mu_R(h))$ and $(\mu_B(h), 1-\mu_B(h))$.   
       The score $\phi_P(h^*)$ of the maximizing $h^* = \arg\max_{h \in \H} \phi_P(h)$ is the individually most powerful statistic for evaluating this sort of anomaly.  Moreover, efficient $\eps$-approximate algorithms~\cite{matheny2016scalable} retain high statistical power.  

       One can cover the relevant part of $\phi_P$ with $O(\frac{1}{\eps} \log \frac{1}{\eps})$ linear functions $\phi_\alpha(h) = \alpha |h \cap R| + (1-\alpha) |h \cap B|$, so returning the maximum $h$ over any function $\eps$-approximately maximizes $\phi_P$.  And by changing the weight of points in $R$ and $B$ to $\alpha$ and $(1-\alpha)$ our results provide hardness for any of these linear instances.

       In spatial scan statistics, it is common to search over balls \cite{kulldorff1997spatial,matheny2016scalable} instead of halfspaces (although halfspaces are considered~\cite{matheny2020scalable}), and procedurally, a Veronese map from balls in $\R^{d-1}$ to halfspaces in $\R^d$ reduces this to a halfspace problem.  All known methods with provable runtime and error guarantees~\cite{matheny2016scalable,MathenyP21} either explicitly use or are isometric to using this transformation.   

\end{itemize}

\subsection{Additional Related Work}
\subparagraph{Maximum Discrepancy Classification.}
Exact algorithms are known that compute halfspace discrepancy in $O(n^d)$ time in any dimension $d\geq 2$ \cite{DobkinE93,DobkinEM96,dobkin1996computing}.

It was shown in \cite{MathenyP21} that the $\eps$-\MaxH problem can be solved in $O(\frac{1}{\eps^d}\log^4\frac{1}{\eps})$ time (plus a linear scan over the data in time $O(n)$). This was complemented by conditional lower bounds in $2$ dimensions: an exact solution requires $\Omega(n^{3/2})$ time conditioned on the conjecture that the all-pairs-shortest-paths problem (\textsc{APSP}) requires $\Omega(n^3)$ time to be solved. An approximate solution requires $\Omega(1/\eps^{2})$ by reduction from $3$-\textsc{Sum} under the special case $k=3$ of the aforementioned $k$-\textsc{Sum} conjecture detailed in \Cref{sec:intro}.

Previous results by the same authors \cite{MathenyP18} gave $O(\frac{1}{\eps^{d+1/3}}\log^{2/3}\frac{1}{\eps})$ for $\eps$-\MaxH. They also studied a version that considered rectangular ranges instead of halfspaces, and provided an $O(n + \frac{1}{\eps^2} \log \log \frac{1}{\eps})$ algorithm and an $\Omega(1/\eps^2)$ lower bound leveraging a previous $\Omega(n^2)$ lower bound of~\cite{BackursDT16} for the exact rectangle variant. These lower bounds were again conditioned on the conjecture that \textsc{APSP} requires $\Omega(n^3)$ time to be solved.

\subparagraph{More Connections to Learning Theory.}
Aside for the classic work described above, the hardness of classification has been studied in computational learning theory from other perspectives.  A touchstone result by Guruswami and Raghvendra~\cite{guruswami2009hardness} showed it was NP-Hard in the dimension $d$ to distinguish between instances where there exist halfspaces which agree on at least a $(1-\eps)$ fraction of the labels and ones where no halfspace agrees on more than a $(1/2+\xi)$ fraction of the data. Their work, and many others in this line (c.f. \cite{feldman2012agnostic}), consider data from $\{0,1\}^d$, and since $O(d/\eps^2)$ samples are sufficient to attain this $\eps$-error, the number of points $n$ does not show up separately in this analysis. But this work does not for instance control the constant in the exponent of the runtime; it does not rule out algorithms with runtime $1/\eps^{\max\{2, d/1{,}000{,}000\}}$.  
Also, algorithms with runtime polynomial in $1/\eps$ and $d$ are possible with stronger noise like Massart noise~\cite{diakonikolas2019distribution}.  

A very recent and independent work by Pinto, Palit, and Raskhodnikova~\cite{pinto2026computational} approaches these problems from the property testing perspective, and conditioned on $k$-\textsc{Sum}, attains lower bound on the runtime for $\eps$-\MaxH of $\Omega(1/\eps^{\lceil (d+1)/2 \rceil - c} )$.  The core of their argument and construction is basically the same as our $k$-\textsc{Sum} hardness result.  

\subparagraph{No-Dimensional Results in Computational Geometry and Classification.}
Computational geometry has recently witnessed a number of ``no-dimensional'' or ``dimension-free'' results. These are often approximate solutions to problems in $\R^d$ where the core aspect of the problem does not depend on the dimension $d$. It may take $O(d)$ time to read or process each data point, so they are not independent of the dimension $d$, but importantly, these sorts of results prevent exponential dependence. Arguably this expedition started with approximate minimum enclosing ball and geometric median problems~\cite{Badoiu02,BadoiuC03,KrivosijaM19}, which only require $O(1/\eps)$ resp. $O(1/\eps^2)$ points to approximate within a relative $(1+\eps)$. This sort of analysis led to more general analyses of Frank-Wolfe style greedy algorithms~\cite{clarkson2010coresets}, which induced no-dimensional results for the approximate polytope distance problem~\cite{gartner2009coresets}.  
Also, classical combinatorial geometric consequences of convexity such as Carath\'eodory, Helly, and Tverberg theorems can be made no-dimensional when the diameter is bounded~\cite{adiprasito2020theorems}. And this led to improved centerpoint and $\eps$-net results \cite{choudhary2022no,har2025no}. 

These geometric insights have led to improvements in sublinear and no-dimensional algorithms for classic linear classification problems~\cite{clarkson2012sublinear,ding2021stability}.  These analyses, and the above mentioned analysis of polytope distance~\cite{gartner2009coresets}, are refinements of Novikoff's classic analysis of the perceptron for linear classifiers, which is also no-dimensional.  In particular, it requires that there exists a perfect classifier with no mistakes, and depends on the ratio between the distance between classes, and the diameter of the input set, cf. \cite{Har-PeledRZ07,Har-Peled15}. The linear classification problem which optimizes an often convex loss function (e.g., logistic regression) requires additional assumptions to be made no-dimensional. Classic sparsity assumptions do not prevent from mild $\Theta(\log(d)/\eps^2)$ dependence~\cite{MaiMMRSW23}. Strictly dimension-free bounds can be achieved if one uses a regularization term and also assumes the diameter is bounded (in expectation)~\cite{AlishahiP24}. These results $(1+\eps)$-approximate typical classification loss functions with $O(1/\eps^2)$ samples and an additional dependence that relies solely on the amount and type of regularization.
Apart from classification there is a plethora of no-dimensional results in the clustering regime. The first dimension-free bounds for $k$-means/median clustering objectives appeared in \cite{FeldmanSS20,SohlerW18}. Recent developments extend the classic finite data setting to learning-theoretical dimension-free bounds for center based clustering and classification of points and polygonal curves~\cite{BucarelliLST23,Krivosija25,AlishahiP24}.

\subparagraph{Impossibility of No-Dimensional Results.}
In the context of these numerous no-dimensional results, it may be surprising that the geometric linear classification problem, and especially the approximately optimized variant tackled in this paper, is shown to require time exponential in $d$ under conditional hardness assumptions.  What is different in our setting?  
First, the perceptron-based results require separability, so $\psi(h^*) = 0$; our constructions do not consider this case, and have $\psi(h^*)$ fairly large.  
Second, these results mostly consider a bounded diameter of the point set, and also a structural term that prevents affine scaling, which could be the margin or the regularization term that has a similar effect. While we could affinely scale our constructions to be within a unit ball, it would make these other structural parameters (e.g., the margin between the properly separated points) proportionally small.  
Third, while no-dimensional $\eps$-nets~\cite{har2025no} and $\eps$-covers~\cite{phillips2025dimension} exist -- and these play a crucial role in the best algorithms for our variant of classification~\cite{MathenyP18,MathenyP21} -- these results apply for bounded-radius balls or kernels, and not the halfspaces we study.  

Hence, we hope our conditional hardness results help the community identify when these high-dimensional geometric problems relevant for data analysis can, \emph{and cannot}, be solved in a dimension-efficient way. Our work provides recipes to show that even when approximation is allowed, exponential dependence on the dimension can be unavoidable.

\section{Improved Halfspace Discrepancy Lower Bounds}\label{sec:formalpart}

\subsection{\texorpdfstring{$k$}{k}-{Sum} Hardness}\label{sec:ksumhardness}

We reduce from the $k$-\textsc{Sum} decision problem.

\begin{theorem}\label{thm:ksum}
Assume that there exists no algorithm that solves $k$-{\rm \textsc{Sum}} on $n$ items within runtime $O(n^{\lceil k/2 \rceil - c})$ for any $c > 0$. Then there exist sets $R,B \subset \R^d$, for $d=k-1$, each of size $n/2$, for which there is no algorithm that solves either 
\begin{itemize}
   \item \MaxH in time $O((n/(2(d+1))^{\lceil (d+1)/2 \rceil - c})$, or
   \item $\eps$-\MaxH in time $O( (1/(2(d+1)\eps))^{\lceil (d+1)/2 \rceil - c} )$.
\end{itemize}
\end{theorem}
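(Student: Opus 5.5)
We reduce $k$-\textsc{Sum} to \MaxH along the lines of the technical overview. Fix an instance $a_1,\ldots,a_n\in\Z$ and set $d=k-1$. We first normalize the integers. By adding a large multiple of the input bound to every $a_i$ and tagging, we may assume the instance is of the standard ``$k$ groups'' form: we look for $a_{i_1}+\dots+a_{i_{k-1}} = -a_{i_k}$ with one element chosen per role $j\in\{0,\dots,d\}$. For every $i\in[n]$ and $j\in\{0,\ldots,d\}$ we build a point $x_{i,j}\in\R^d$.
\begin{itemize}
\item For $1\le j\le d-1$, the first coordinate is $a_i$ and the remaining coordinates are $e_j\in\R^{d-1}$.
\item For $j=0$, the first coordinate is $-\lambda a_i$ and the rest are $\mathbf 0$.
\item For $j=d$, the first coordinate is $-\lambda a_i$ and the rest are $\mathbf 1$.
\end{itemize}
Here $\lambda$ is a fixed technical scalar chosen so that the coefficients work out. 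We then set $r_{i,j}=x_{i,j}+\gamma e_1\in R$ and $b_{i,j}=x_{i,j}-\gamma e_1\in B$. This gives $|R|=|B|=n(d+1)$, so the total size is $N=2(d+1)n$, which accounts for the factor $2(d+1)$ in the statement. Every $R$-point has a $B$-twin. Hence $\phi(h)$ equals the number of pairs separated with red inside minus the number separated with blue inside, and in particular $\phi(h)\le$ the number of pairs strictly separated by $\partial h$.

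\textbf{Completeness.} Suppose $a_{i_1}+\cdots+a_{i_k}=0$. Take the affine hyperplane through $x_{i_1,1},\dots,x_{i_{d-1},d-1}$ together with the two ancillary points $x_{i_k,0}$ and $x_{i_k,d}$ (or the appropriate assignment of roles). The gadget coordinates make these $k$ points affinely dependent exactly when the first-coordinate relation holds: the combination $e_1+\dots+e_{d-1}=\mathbf 1$ forces coefficients $1$ on the basis gadgets and $\pm1$ on the ancillaries. So they span a hyperplane $H$ that contains all $k$ points. We tilt $H$ slightly, rotating about a codimension-2 subspace if needed, so that each of the $k$ chosen $r$'s lies on the positive side and each $b$ on the negative side, with no other point crossing. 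Every other pair is either unseparated, contributing $0$, or we orient so that it does not subtract. This gives $\phi(h)\ge k$.

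\textbf{Soundness.} This is the main obstacle. Suppose $\phi(h)\ge k$. Then $\partial h$ passes through the $2\gamma$-gap of at least $k$ pairs. We must show these pairs have the right role pattern, namely one per role $j$, and that they yield a zero sum. We first argue on the gadget coordinates. Write $h$ as $\langle w,x\rangle=\xi$ with $w=(w_1,w')$. For pairs in role $j\in[1,d-1]$, passing through the gap means $w_1 a_i + w'_j - \xi\in[-|w_1|\gamma,|w_1|\gamma]$. Since the $a_i$ are distinct integers, and after the normalization each role's values live in disjoint ranges, two pairs with the same role cannot both be cut unless $|w_1|$ is tiny. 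In that degenerate case only one "level" per role can be cut. I expect the careful handling of this case (hyperplanes nearly parallel to $e_1$, and hyperplanes cutting many same-role pairs) to be the delicate part. My first attempt is to scale the $a_i$ so that the gaps between distinct values exceed $2\gamma$ times the gadget spread, which forces at most one cut pair per role. With one pair per role, eliminating $w'$ and $\xi$ using the $k$ equations (the $e_j$ structure lets us solve $w'_j$ in terms of $w_1 a_{i_j}$ and $\xi$, and the $\mathbf 1$ gadget ties them together with the $\mathbf 0$ gadget) gives $|w_1|\cdot|\sum_j a_{i_j}|\le (k-1)\gamma|w_1|$ up to the scalar $\lambda$. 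Choosing $\gamma<1/k$ then forces the integer sum to vanish.

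\textbf{Running times.} The transformation takes $O(nk)$ time, so an $O(N^{\lceil (d+1)/2\rceil-c})$ algorithm for \MaxH would solve $k$-\textsc{Sum} in $O(n^{\lceil k/2\rceil-c})$, contradicting the hypothesis. For the approximate version, we set $\eps<1/N=1/(2(d+1)n)$. An $\eps$-approximation then has additive error below $1$, hence is exact since $\phi$ is integral. An $O((1/(2(d+1)\eps))^{\lceil (d+1)/2\rceil-c})$ algorithm would therefore again give an $O(n^{\lceil k/2\rceil - c})$ algorithm.
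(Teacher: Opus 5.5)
Your overall plan is the paper's: twin copies shifted by $\pm\gamma e_1$, one gadget per role, elimination plus integrality. Your strict choice $\eps<1/N$ is also the right one. But the gadget you actually fix does not encode $k$-\textsc{Sum}. For one point per role, $v_0=(\beta_0,\mathbf 0)$, $v_j=(\beta_j,e_j)$ for $j\in[d-1]$ and $v_d=(\beta_d,\mathbf 1)$, any affine dependence $\sum_j c_jv_j=0$ with $\sum_j c_j=0$ is forced by the gadget coordinates to have $c_j=-c_d$ for $j\in[d-1]$, hence $c_0=(d-2)c_d$. So the role-$0$ ancillary gets coefficient $d-2$, not $\pm1$. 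With a single $\lambda$, the first coordinate then gives $\lambda a_{i_d}+(d-2)\lambda a_{i_0}+\sum_{j=1}^{d-1}a_{i_j}=0$, which is the $k$-\textsc{Sum} relation only for $d=3$. For $k=3$ the construction collapses entirely: $\mathbf 1=e_1$ in $\R^{1}$, so roles $1$ and $d$ lie on the same line and role $0$ gets coefficient $0$.

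You need role-dependent scalings. The paper takes first coordinates $-a_i/(2d-3)$, $a_i/2$, $-a_i$ and the gadget $2\cdot\mathbf 1$ for role $d$. The dependence then has coefficients $-(2d-3),2,\dots,2,-1$, which are nonzero for every $d\ge 2$, and the $a$'s combine into the plain sum. The resulting error is $(2d-3)\gamma+2(d-1)\gamma+\gamma<4d\gamma$, so you need $\gamma<1/(4d)$ rather than $\gamma<1/k$. Your completeness step also uses only $k-1$ indices, since both ancillaries get the index $i_k$.

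The step you call the main obstacle is not delicate, and you need neither the $k$-groups normalization nor a rescaling of the $a_i$. The observation you are missing, which the paper uses, is that all points of role $j$ lie on one line $l_j$ parallel to $e_1$. Along that line, after removing duplicate $a_i$, the twin pairs appear consecutively and are more than $2\gamma$ apart. Your own inequality already gives this: two cut pairs of the same role yield $|w_1|\,|a_i-a_{i'}|\le 2\gamma|w_1|$ (up to the role's scale factor). That is impossible whenever $w_1\neq 0$, no matter how small $|w_1|$ is. For $w_1=0$, every line lies in $\partial h$ or strictly on one side, so $\phi(h)=0$. Hence each line contributes at most $1$ to $\phi$, so $\phi\le k$, and $\phi\ge k$ forces exactly one cut pair per role with red inside; your elimination then applies. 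Roles are already separated by the gadget coordinates, and reusing one $a_i$ in several roles is exactly multiset $k$-\textsc{Sum}.

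Likewise, completeness needs no tilting. The hyperplane through one point per role cannot be parallel to $e_1$: the differences to the role-$0$ point have gadget parts $e_1,\dots,e_{d-1}$, so a normal with $w_1=0$ would vanish. Orienting it with $w_1>0$ gives $\langle w,r\rangle-\langle w,b\rangle=2\gamma w_1>0$ for every twin pair, so no pair contributes $-1$.
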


\begin{proof}
We reduce $k$-\textsc{Sum} to \textsc{MaxHalfspace} with $d=k-1$ as follows. 
Let $A = \{a_1, \ldots, a_n\}$ for integers $a_i \in \mathbb{Z}$ be an instance of $k$-\textsc{Sum}.  
We will transform this input into a set of $2nk$ points in $\mathbb{R}^d$ where the solution to \textsc{MaxHalfspace} will resolve the $k$-\textsc{Sum} instance on $A$.  

\subparagraph*{The Construction.}
For any integer $a_i, i\in[n]$ consider the following $k=d+1$ points in $\mathbb{R}^d$:
Let $z_{i,0}:=(-a_i/(2d-3), 0, \ldots, 0)$ and for each $j \in [d-1]$, let $z_{i, j}=(a_i/2, 0, \ldots, 0, 1, 0 \ldots, 0)$ which has a single $1$ entry at position $j+1$.
Further we create the point $z_{i, d}=(-a_i, 2, 2, \ldots, 2) $.

Further for each point $z_{i, j}$ we create two points: one point, which we place in $R$ and one point, which we place in $B$.  For $j \in \{0,\ldots,d\}$ we define $x_{i, j} \in R $ by $x_{i, j}= z_{i, j}+ \gamma e_1 $ and $y_{i, j} \in B $ by $y_{i, j}= z_{i, j} - \gamma e_1 $ where $\gamma$ is a sufficiently small number (for instance $\gamma < 1/(4d)$) and $e_1$ denotes the first standard basis vector in $\mathbb{R}^d$.

\subparagraph*{Restricting $R-B$ Difference to at least $k$.}
First, assume that there is $h$ with $\mu_R |R|-\mu_B |B|\geq k $. 
Consider a set of $k=d+1$ lines defined by $l_0(t)=(t, 0 , \ldots, 0) $, $l_j(t)=(t, 0, \ldots, 0, 1, 0, \ldots, 0)$, for $j\in[d-1]$, and $l_d(t)=(t, 2, \ldots, 2)$. Observe that all points in $R$ and $B$ lie on these lines.  
Let $v_j$ be the point where $h$ crosses the line $l_j$, for $j\in \{0,\ldots,d\}$.

Note, since the $z_{i,j}$ points are further apart than $2\gamma$, then the number of points $x_{i,j}$ or $y_{i,j}$ on each of the lines that lie above $h$ in $R$ can differ by at most $1$ from the number of points above $h$ in $B$.

Thus, there cannot be any hyperplane with $\mu_R |R|-\mu_B |B|>k$ and in order to achieve $\mu_R |R|-\mu_B |B|=k$, the hyperplane $h$ must cross each line $l_j$ at a point $v_{j}=(\beta_{j+1}, \ldots)$ that for some $i_j\in[n]$, lies between the points $ z_{i_j, j}- \gamma e_1$ and $ z_{i_j, j}+ \gamma e_1$. Equivalently, we have that $\beta_{j+1} \in [e_1^Tz_{i,j}- \gamma, e_1^Tz_{i,j}+\gamma] $. To simplify notation, we define $\alpha_{i,j}\coloneqq e_1^Tz_{i,j-1}$ for all $(i,j)\in [n]\times [k]$, where $k=d+1$. With this substitution, we get that $\beta_{j} \in [\alpha_{i,j}- \gamma, \alpha_{i,j}+\gamma] $.

\subparagraph*{Showing $k$-Separation implies $k$-{Sum}.}
Consider the points $w_j = v_j-v_0$, for $j\in [d]$. Since these are $d$ points in $\mathbb{R}^d$, they must lie on a common linear hyperplane and the only way to align the upper $d-1$ coordinates is by setting $w_d = 2\sum_{j=1}^{d-1} w_j$.

This is equivalent to $0=v_0 - v_d + 2 \sum_{j=1}^{d-1} (v_j-v_0) = -(2d-3) v_0 + (2 \sum_{j=1}^{d-1} v_j) - v_d$,
implying in particular for the first coordinate that
\begin{equation}\label{eqn:bzero}
    -(2d-3) \beta_1+ \left(2\sum\limits_{j=2}^{k-1} \beta_j\right) - \beta_{k}= 0 .
\end{equation}
Now, since for all $j\in[k]$ there exists some $i_j\in[n]$ such that  $|\alpha_{i_{j},j}-\beta_j|\leq \gamma$, we get that 
\begin{align*}
    & |a_{i_1} + \ldots + a_{i_k}| \\
    &\overset{Eq.~\eqref{eqn:bzero}}{=} \left|a_{i_1} + \ldots + a_{i_k} -
    \left(-(2d-3) \beta_1+ \left(2\sum_{j=2}^{k-1} \beta_j\right) - \beta_{k}\right)\right|\\
    &\quad=\quad\!\left|-(2d-3)\alpha_{i_1, 1} + \left(2 \sum_{j=2}^{k-1} \alpha_{i_j, j} \right) - \alpha_{i_k, k} -
    \left(-(2d-3) \beta_1+ \left(2\sum_{j=2}^{k-1} \beta_j\right) - \beta_{k}\right)\right|\\
    &\quad\leq\quad (2d-3) |\alpha_{i_1,1}-\beta_1| + \left( \sum_{j=2}^{k-1} 2 |\alpha_{i_j,j}-\beta_j| \right) + |\alpha_{i_k,k}-\beta_k| \leq 4d\gamma < 1\,,
\end{align*}
where we used the triangle inequality and the choice $\gamma < 1/(4d)$.
Since all $a_{i}$ are integers, it follows that $a_{i_1} + \ldots + a_{i_k} = 0$.

\subparagraph*{Showing $k$-{Sum} implies $k$-Separation.}
For the other direction, assume that $a_{i_1}+ \ldots + a_{i_k}=0 $ or equivalently $a_{i_1}+ \ldots + a_{i_{k-1}}= -a_{i_k}$.
Then the points $z_{i_1, 0}, \ldots, z_{i_k, d} $ lie on a hyperplane $h$. To prove this, note that any set of $d$ points in $d$ dimensions lie on a hyperplane. We need to show that the remaining point $z_{i_k, d}$ lies on the same hyperplane as the others. By a similar calculation as above, we get
\begin{align*}
    z_{i_k, d}=(-a_{i_k}, 2, \ldots , 2)
    &= (a_{i_1}+ \ldots + a_{i_{k-1}}, 2 ,\ldots, 2) \\
    &=-(2d-3)z_{i_1, 0}+2 \sum_{j=1}^{d-1} z_{{i_{j+1}}, j}
    = z_{i_1, 0}+ 2 \sum_{j=1}^{d-1} (z_{i_{j+1}, j}- z_{i_1, 0}).
\end{align*}
For this hyperplane it holds that $|\mu_R |R|-\mu_B |B|| \geq k$, since for any point $z_{i, l}$, for $l\in \{0,\ldots, d\}$ that lies on the hyperplane, the corresponding $x_{i, l}$ (and $y_{i, l}$) lie above (respectively below) the hyperplane, all counting for one of the two classes. If it does not lie on the hyperplane, then by construction and for sufficiently small $\gamma$ as above, both $x_{i, l}$, and $y_{i, l}$ lie on the same side of the hyperplane and thus both contribute $0$ or cancel each other.

\subparagraph*{Exact Bound.}
The number of points is $2nk=2n(d+1)$, and so any algorithm with runtime $O((\beta n)^\alpha)$ for \MaxH yields an algorithm with runtime $O((\beta n)^\alpha)$ for $k$-\textsc{Sum}.  Set $\beta = 1/(2(d+1))$ and $\alpha=\lceil k/2 \rceil - c= \lceil (d+1)/2 \rceil - c$.   Now using the Lemma's assumption on $k$-\textsc{Sum} hardness, there cannot be an algorithm for \MaxH within runtime $O((\beta n)^\alpha) = O((n/(2(d+1))^{\lceil (d+1)/2 \rceil - c})$.

\subparagraph*{Cost of Reduction with Error $\eps$.}
Similar to the reduction for \MaxH, we now set $\varepsilon=1/(2n(d+1))$, and any algorithm with runtime $O((\beta/\eps)^\alpha)$ for $\eps$-\textsc{MaxHalfspace} yields an algorithm with runtime $O((\beta/\eps)^\alpha)=O( 2n(d+1)\beta)^\alpha)$ for $k$-\textsc{Sum}.
We again set $\beta=1/(2(d+1))$ and $\alpha=\lceil k/2 \rceil - c= \lceil (d+1)/2 \rceil - c.$
Thus, assuming the $k$-\textsc{Sum} hardness conjecture as in the statement of the lemma, there cannot be any algorithm that solves $\varepsilon$-\textsc{MaxHalfspace} within runtime $O( 1/(2(d+1)\varepsilon)^{\lceil (d+1)/2 \rceil - c} )$.
We note that the time for the reduction from $k$-\textsc{Sum} to \textsc{MaxHalfspace} is $O(n k^2)\subseteq O(n^\alpha)$. 
\end{proof}

\subsection{Affine Degeneracy Hardness}\label{sec:affinedegenhardness}
The next lower bound is derived by reduction from the \textsc{AffineDegeneracy} problem. We first need a review of some basic linear algebra facts that are used in our proof.

\subsubsection{Preliminaries and Review of Linear Algebra}
Let $[d]=\{1,\ldots,d\}$ denote the positive integers up to $d$. Let $\mathbb{Z}_N = \{-N, \ldots, N\}$ be the set of integers between $-N$ and $N$. Given a square $d\times d$ matrix $X$, we denote its $i,j$-th entry for $i,j\in[d]$ by $X_{i,j}$ and its $i$-th row vector by $X_i$. Further, let its $(i,j)$-minor be the $(d-1) \times (d-1)$ submatrix denoted by $X^{\setminus (i,j)}$ that results from $X$ by removing its $i$-th row and $j$-th column.
We denote by $I$ the $d\times d$ identity matrix and by $e_i$ its $i$-th column vector.

Our reduction involves the analysis of the \emph{determinant} $\det(X)$ and the \emph{adjugate} matrix $\adj(X)$, whose entries are defined as $\adj(X)_{i,j} = (-1)^{i+j}\cdot \det(X^{\setminus (j,i)})$.
We summarize certain well-known facts about these quantities and their relationships.
\begin{fact} 
{Let $X \in \mathbb{Z}_N^{d \times d}$ with linearly independent rows and columns. We remind of the following facts from linear algebra:}

\begin{enumerate}
\newcounter{foo}
\setcounter{foo}{1}
  \item $\det(X) \in \mathbb{Z}$ \label{F1:detinteger}
  \item $1\leq |\det(X)| \leq d^{d/2} N^{d}$ \label{F2:Hadamard}
  \item $X^{-1} = \det(X)^{-1}\adj(X)$ \label{F3:invdetadj}
  \item $\adj(X) \in \mathbb{Z}^{d\times d}$ \label{F4:adjinteger}
  \item Let $u,v\in \mathbb{R}^d$. If $X$ is invertible and $1+u^TX^{-1}v \neq 0$, then 
  \[ 
    X^{-1} - (X+u^Tv)^{-1} = \frac{X^{-1}uv^TX^{-1}}{1+u^TX^{-1}v}\,.
  \] \label{F5:shermanmorrison}
\end{enumerate}
\end{fact}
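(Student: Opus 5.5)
We verify the five items one at a time, using only integrality of the entries and standard determinant identities. No earlier result of the paper is needed. The only point requiring care is the precise form of item~\ref{F5:shermanmorrison}.

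For item~\ref{F1:detinteger} we use the Leibniz expansion $\det(X)=\sum_{\sigma\in S_d}\sgn(\sigma)\prod_{i=1}^d X_{i,\sigma(i)}$. This is a signed sum of products of integers, hence an integer. Item~\ref{F4:adjinteger} follows immediately: each entry of $\adj(X)$ is $\pm\det(X^{\setminus(j,i)})$, and the minor $X^{\setminus(j,i)}$ is again an integer matrix (of size $(d-1)\times(d-1)$), so the same argument applies to it.

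For item~\ref{F2:Hadamard}, the rows of $X$ are linearly independent, so $\det(X)\neq 0$; together with item~\ref{F1:detinteger} this gives $|\det(X)|\ge 1$. For the upper bound we invoke Hadamard's inequality $|\det(X)|\le\prod_{i=1}^d\|X_i\|_2$. Its standard proof applies Gram--Schmidt to the rows, writing $X=LQ$ with $L$ lower triangular and $Q$ orthogonal, and notes that $|L_{ii}|\le\|X_i\|_2$. Since every entry satisfies $|X_{i,j}|\le N$, we have $\|X_i\|_2\le\sqrt{d}\,N$. Therefore $|\det(X)|\le(\sqrt d N)^d=d^{d/2}N^d$.

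For item~\ref{F3:invdetadj} we show $X\adj(X)=\det(X)I$ via the Laplace expansion. The $(i,k)$ entry of $X\adj(X)$ is $\sum_j X_{i,j}(-1)^{j+k}\det(X^{\setminus(k,j)})$. When $i=k$, this is the expansion of $\det(X)$ along row $i$. When $i\neq k$, it is the expansion of the determinant of $X$ with row $k$ replaced by row $i$; that matrix has two equal rows, so its determinant is $0$. Dividing by $\det(X)\neq0$ yields $X^{-1}=\det(X)^{-1}\adj(X)$.

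Item~\ref{F5:shermanmorrison} is the Sherman--Morrison formula, and this is the step where I expect the only real care to be needed. As printed, $u^Tv$ is a scalar, so the intended statement must be the rank-one update $X+uv^T$, and the non-degeneracy condition should be $1+v^TX^{-1}u\neq0$. I would prove it in that corrected form. Set $\lambda=v^TX^{-1}u$ and $M=X^{-1}-\frac{X^{-1}uv^TX^{-1}}{1+\lambda}$. We check directly that $(X+uv^T)M=I$:
\[
(X+uv^T)M = I+uv^TX^{-1}-\frac{uv^TX^{-1}+u(v^TX^{-1}u)v^TX^{-1}}{1+\lambda} = I+uv^TX^{-1}-\frac{(1+\lambda)\,uv^TX^{-1}}{1+\lambda}=I,
\]
where we used that $v^TX^{-1}u=\lambda$ is a scalar and can be pulled out. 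Hence $(X+uv^T)^{-1}=M$, and rearranging gives the displayed identity. Since the matrices are square, the right inverse is a two-sided inverse.
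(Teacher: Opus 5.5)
Your proof is correct and follows the paper's argument item by item. You use the Leibniz expansion for integrality, $\det(X)\neq 0$ plus Hadamard's inequality for the two-sided bound, the cofactor identity $X\adj(X)=\det(X)I$ for the inverse, integrality of the minors for the adjugate, and Sherman--Morrison. You give more detail than the paper, for instance verifying the rank-one inverse directly instead of citing it.

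Your reading of item~5 as the update $X+uv^T$, with $1+v^TX^{-1}u$ in both the hypothesis and the denominator, is the right correction of the printed typo. It is exactly the form the paper later uses when it writes $X'=X+\gamma\lambda e_1^T$ in the proof of Theorem~\ref{thm:affinedegeneracy}.
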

\begin{proof}
    By Leibnitz' formula, also known as the \emph{big formula} \cite{Strang2021}, the determinant of $X$ can be written as a signed sum over products of the entries of $X$. Since each $X_{i,j}\in\mathbb{Z}_N\subset \mathbb{Z}$, these products are again in $\mathbb{Z}$ and signed sums of the products are again integers in $\mathbb{Z}$. We thus get Fact~\ref{F1:detinteger}.

    Next we have by linear independence, that $\det(X)\neq 0$. From this and Fact~\ref{F1:detinteger} it follows that $1\leq |\det(X)|$. Hadamard's determinant bound \cite{Garling2007} together with $X_{i,j}\in\mathbb{Z}_N$ yields that 
    \[
        1\leq |\det(X)|\leq \prod_{j\in [d]}\left({\sum_{i\in[d]} X_{i,j}^2}\right)^{1/2} \leq \prod_{j\in [d]} ( dN^{2})^{1/2} = d^{d/2}N^d\,.
    \]
    We thus conclude Fact~\ref{F2:Hadamard}.

    Applying the Laplace expansion of the determinant, also known as \emph{cofactor formula} \cite{Strang2021}, to every row of $X$ it follows that $X\adj(X) = \det(X)I$. We recall that by linear independence, $X$ is invertible. Left multiplication with $X^{-1}$ thus yields $\adj(X)=X^{-1}X\adj(X) = X^{-1}\det(X)I$. By Fact~\ref{F2:Hadamard}, we have $\det(X)\neq 0$ and thus dividing both sides by $\det(X)$ concludes Fact~\ref{F3:invdetadj}, i.e., $X^{-1} = \det(X)^{-1}\adj(X)$.

    Note that each $(i,j)$-minor of $X\in\mathbb{Z}_N^{d \times d}$ for $i,j\in[d]$ is a submatrix $X^{\setminus (i,j)}\in \mathbb{Z}_N^{(d-1) \times (d-1)}$. Then Fact~\ref{F4:adjinteger} follows from Fact~\ref{F1:detinteger} since then $\adj(X)_{i,j} = (-1)^{i+j}\cdot \det(X^{\setminus (j,i)}) \in \mathbb{Z}$.

    Fact~\ref{F5:shermanmorrison} is a rearrangement of the Sherman-Morrison formula \cite{Strang2021} to isolate the remainder term of the inverse under a rank-$1$ update $X^{-1} - (X+u^Tv)^{-1}$.
\end{proof}

We will also need a bound on the coordinates of the inverse matrix multiplied by an arbitrary vector in the unit $\ell_\infty$ ball. 
To this end, we show that if $X$ has integer coordinates bounded between $-N$ and $N$, then the coordinates in the following inverse term can be controlled in terms of $N$ and $d$.

\begin{lemma}\label{lem:H-det}
   Let $\lambda \in [-1,1]^d$ and $X \in \mathbb{Z}_N^{d \times d}$.  Then for each coordinate $i \in [d]$ we have that
   \[  
      |(X^{-1} \lambda)_i| \leq \frac{1}{|\det(X)|} \cdot d^{(d+1)/2} N^{d-1}.
   \]    
\end{lemma}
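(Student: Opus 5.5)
We assume implicitly, as in the Fact, that $X$ is invertible, so that $\det(X)\neq 0$ and $X^{-1}$ is defined. The plan is to write $X^{-1}$ via the adjugate and then bound each row of the adjugate entrywise using Hadamard's inequality on the minors.

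By Fact~\ref{F3:invdetadj} we have $X^{-1}\lambda = \det(X)^{-1}\adj(X)\lambda$. Hence for each $i\in[d]$,
\[
|(X^{-1}\lambda)_i| = \frac{1}{|\det(X)|}\Bigl|\sum_{j\in[d]} \adj(X)_{i,j}\lambda_j\Bigr| \leq \frac{1}{|\det(X)|}\sum_{j\in[d]} |\det(X^{\setminus(j,i)})|,
\]
where we used $|\lambda_j|\le 1$ and the definition $\adj(X)_{i,j}=(-1)^{i+j}\det(X^{\setminus(j,i)})$. It therefore suffices to show that each minor satisfies $|\det(X^{\setminus(j,i)})|\le d^{d/2}N^{d-1}$. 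Summing this bound over the $d$ indices $j$ then gives the factor $d\cdot d^{d/2}N^{d-1}\le d^{(d+1)/2}\cdot d^{1/2}N^{d-1}$. That is slightly too large, so the minor bound has to be sharper.

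For the sharper bound, apply Hadamard's inequality directly to the $(d-1)\times(d-1)$ minor, whose entries lie in $\mathbb{Z}_N$. Each of its $d-1$ columns has Euclidean norm at most $\sqrt{d-1}\,N$, so $|\det(X^{\setminus(j,i)})|\le (d-1)^{(d-1)/2}N^{d-1}\le d^{(d-1)/2}N^{d-1}$. Summing over the $d$ values of $j$ gives $d\cdot d^{(d-1)/2}N^{d-1}=d^{(d+1)/2}N^{d-1}$, which is exactly the claimed bound. Note that we cannot quote Fact~\ref{F2:Hadamard} verbatim here, because the minor may be singular and its lower bound $1\le|\det|$ may fail. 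Only the Hadamard upper bound is needed, however, and that holds for every matrix, singular or not. When the minor is singular its determinant is $0$ and the bound is trivial.

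The only real obstacle is this exponent bookkeeping. A crude use of the triangle inequality combined with the $d\times d$ Hadamard bound loses a factor of $\sqrt d$. The fix is to apply Hadamard in dimension $d-1$ and use $(d-1)^{(d-1)/2}\le d^{(d-1)/2}$. Alternatively, one could apply Cauchy--Schwarz to the row $\adj(X)_i$ together with $\|\lambda\|_2\le\sqrt d$, but the entrywise sum argument already gives the stated constant.
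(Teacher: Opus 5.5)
Your proof is correct and follows essentially the paper's argument: it writes $X^{-1}=\det(X)^{-1}\adj(X)$, bounds the $i$-th coordinate by the $\ell_1$-norm of the $i$-th row of $\adj(X)$ times $\|\lambda\|_\infty\le 1$, and applies Hadamard's bound $(d-1)^{(d-1)/2}N^{d-1}$ to each $(d-1)\times(d-1)$ minor before summing over the $d$ entries. Your remark that only the Hadamard upper bound, and not the lower bound $1\le|\det|$ in Fact~\ref{F2:Hadamard}, applies to possibly singular minors is a fair tightening of how the paper cites that fact.
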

\begin{proof}
    By Hoelder's inequality, the fact that $\|\lambda\|_\infty \leq 1$, and using Hadamard's determinant bound (Fact~\ref{F2:Hadamard}), we have that
    \begin{align*}
        |(X^{-1}\lambda)_i| 
        &\overset{Fact~\ref{F3:invdetadj}}{=} |(\det(X)^{-1} \adj(X)\lambda)_i|
        = \frac{1}{|\det(X)|} \cdot \left|(\adj(X)\lambda)_i\right| \nonumber \\
        &\!\!\overset{Hoelder}{\leq} \frac{1}{|\det(X)|} \cdot \|\adj(X)_i\|_1\|\lambda\|_\infty
        \leq \frac{1}{|\det(X)|} \cdot \sum_{j=1}^d \left|\det(X^{\setminus(j,i)})\right| \nonumber \\
        &\!\overset{Fact~\ref{F2:Hadamard}}{\leq} \frac{1}{|\det(X)|} \cdot d \cdot (d-1)^{(d-1)/2} N^{d-1}
        \leq \frac{1}{|\det(X)|} \cdot d^{(d+1)/2} N^{d-1}\, . \qedhere
    \end{align*}
\end{proof}

\subsubsection{Main Problem Reduction}
We are now ready to prove the main result of this section.

\begin{theorem}\label{thm:affinedegeneracy}
Assume that there exists no algorithm for {\rm \textsc{AffineDegeneracy}} on $n$ points with runtime $O(n^{d-c})$ for any $c >0$. Then there exist sets $R,B \subset \R^d$, each of size $n/2$, for which there is no algorithm that solves either 
\begin{itemize}
    \item \MaxH in time $O((n/2)^{d-c})$, or
    \item $\eps$-\MaxH in time $O(1/\eps^{d-c})$.
\end{itemize}
\end{theorem}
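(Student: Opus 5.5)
We reduce from \textsc{AffineDegeneracy} as sketched in the technical overview. Given an instance $x_1,\ldots,x_n\in\Z_N^d$, we put $r_i=x_i+\gamma e_1\in R$ and $b_i=x_i-\gamma e_1\in B$, with $\gamma>0$ to be fixed as a function of $d$ and $N$ (of order $1/(\mathrm{poly}(d)\,d^{d}N^{2d})$). We then claim that $\max_h\phi(h)\ge d+1$ if and only if some $d+1$ input points are affinely dependent. First we translate all points by a fixed integer vector so that no relevant hyperplane passes through the origin; this keeps the coordinates integral with $N$ enlarged by a constant factor. The normalization $\langle b,x\rangle=1$ of the overview is then without loss of generality.

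For the direction ``degenerate implies $\phi\ge d+1$'', take the hyperplane through the $d+1$ degenerate points and tilt it slightly. Choose $\gamma$ smaller than half the distance from that hyperplane to every point not on it, so that those points contribute $0$ or cancel. We must also orient the halfspace so that each on-plane pair contributes $+1$, not $-1$. Since $e_1$ might be parallel to the hyperplane, we first perturb the hyperplane within the set of hyperplanes through the $d+1$ points if that set has positive dimension; otherwise the hyperplane is unique, and we handle $e_1$-parallelism by a general-position shear applied beforehand. Each of the $d+1$ on-plane pairs then contributes $1$, giving $\phi(h)\ge d+1$.

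For the converse, each pair contributes at most $1$ to $\phi(h)$, so $\phi(h)\ge d+1$ forces at least $d+1$ indices $i$ with $r_i\in h$ and $b_i\notin h$. Hence $\partial h$ meets the segment $[b_i,r_i]$ at a point $x_i'=x_i+t_i e_1$ with $|t_i|\le\gamma$.

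\begin{itemize}
\item Relabel these points $1,\ldots,d+1$. If $x_1,\ldots,x_d$ are already affinely dependent we are done, so assume the matrix $X$ with rows $x_1,\ldots,x_d$ is invertible.
\item Write $X'=X+t e_1^T$ with $t=(t_1,\ldots,t_d)^T$, a rank-one update.
\item Set $b=X^{-1}\mathbf 1$ and $b'=X'^{-1}\mathbf 1$, so that $\langle b',x_{d+1}'\rangle=1$.
\end{itemize}

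By Facts~\ref{F3:invdetadj} and \ref{F4:adjinteger}, $\langle b,x_{d+1}\rangle-1\in\frac{1}{\det X}\Z$; this is claim (i) with $N_0=\det X$. For claim (ii), we split
\[
\langle b,x_{d+1}\rangle-\langle b',x'_{d+1}\rangle=\langle b-b',x_{d+1}\rangle-t_{d+1}b'_1 .
\]
The second term is at most $\gamma\,|b'_1|$. We bound $b-b'$ with Fact~\ref{F5:shermanmorrison}:
\[
b-b'=\frac{X^{-1}t\,e_1^TX^{-1}\mathbf 1}{1+e_1^TX^{-1}t}.
\]
Lemma~\ref{lem:H-det}, applied with $\lambda=t/\gamma$ and with $\lambda=\mathbf 1$, bounds every entry of $X^{-1}t$ by $\gamma M$ and of $X^{-1}\mathbf 1$ by $M$, where $M=d^{(d+1)/2}N^{d-1}$ (using $|\det X|\ge1$). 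Choosing $\gamma M\le 1/(2d)$ keeps the denominator at least $1/2$. Then $|b'_1|\le 2M$, each entry of $b-b'$ is at most $2\gamma M^2$, and
\[
|\langle b,x_{d+1}\rangle-1|\le 2\gamma d N M^2+2\gamma M .
\]
Since $|\det X|\le d^{d/2}N^d$ by Fact~\ref{F2:Hadamard}, picking $\gamma<1/(4dNM^2 d^{d/2}N^d)$ makes the right-hand side smaller than $1/|\det X|$. Together with (i) this forces $\langle b,x_{d+1}\rangle=1$, so the $d+1$ original points are affinely dependent.

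The main obstacle is this bookkeeping: keeping the Sherman--Morrison denominator away from zero, and making sure the translation used to avoid the origin is uniform and does not blow up $N$. The $\eps$-version follows the accounting of Theorem~\ref{thm:ksum}. With $2n$ points and $\eps=1/(4n)$, an $\eps$-approximate solution has additive error $\eps\cdot 2n=1/2<1$, so it recovers the integer optimum exactly. Since $1/\eps=4n$ and the reduction takes $O(nd)$ time, an $O(n^{d-c})$ or $O(1/\eps^{d-c})$ algorithm would contradict Conjecture~\ref{con:affinedegen}.
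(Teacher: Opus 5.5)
Your converse direction is the paper's argument: the same $\pm\gamma e_1$ construction, claim (i) from $X^{-1}=\adj(X)/\det(X)$, and claim (ii) from the Sherman--Morrison remainder bounded via Lemma~\ref{lem:H-det}. Your split $\langle b-b',x_{d+1}\rangle-t_{d+1}b'_1$ is the paper's decomposition rewritten (with its $R\boldsymbol{1}=b-b'$). Trading the retained $1/|\det X|$ factors for Hadamard's upper bound only shrinks $\gamma$. Your $\eps=1/(4n)$ is more careful than the paper's $\eps=1/n$, which does not make $\eps|R\cup B|<1$. You also rightly flag two points the paper passes over. First, the forward direction needs a hyperplane through the tuple with $w_1\neq0$, oriented so that $w_1>0$, and it fails without one: for $d=2$ and input $(0,0),(1,0),(2,0)$, the six shifted points alternate blue/red along $x_2=0$, so every halfplane has $\phi\le1$. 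Second, $b=X^{-1}\boldsymbol{1}$ requires the spanned hyperplane to miss the origin, which the paper's ``continue in dimension $d'<d$'' does not address.

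Your repairs, however, are asserted rather than proved, and this is the gap. Nothing supports a translation that enlarges $N$ only by a constant factor. Because $\gamma$ is fixed before the tuple is known, you need an explicit a~priori coordinate bound. The shear must work for all tuples simultaneously and be computable without enumerating the $\binom nd$ spanned hyperplanes. It must also cover a case your split misses: if the tuple spans a lower-dimensional flat whose direction space contains $e_1$ (e.g.\ four points on the $x_1$-axis in $\R^3$), every hyperplane through it has $w_1=0$, so perturbing within that family does not help.

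Moment-curve choices close both gaps. Every flat spanned by input points lies in a hyperplane with integer normal $w$ and $\|w\|_\infty\le W:=d^{d/2}(2N)^{d-1}$ (take cofactors of difference vectors, completed by unit vectors). Hyperplanes spanned by $d$ input points have offset $|c|\le dNW$. With $P=dNW+1$, comparing top coefficients gives $\langle w,v\rangle\neq0$ for $v=(1,P,\dots,P^{d-1})$ and $\langle w,q\rangle\neq c$ for $q=(P,P^2,\dots,P^{d})$. The unimodular shear $p\mapsto(p_1,\,p_2-Pp_1,\dots,p_d-P^{d-1}p_1)$ sends $v$ to $e_1$ and costs $O(d)$ per point. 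The translation moving the image of $q$ to the origin is needed only in the analysis, since it changes neither $\max\phi$ nor affine dependence. Coordinates stay integral with $N$ replaced by $\mathrm{poly}(N)$ for fixed $d$, and your estimates go through with the larger $N$.

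Alternatively, $D(t)=\det\bigl[\,x_i^T+t_ie_1^T\ \ 1\,\bigr]_{i\le d+1}$ vanishes at the crossing shifts. It is affine in $t$, with integer constant term $D(0)$ and coefficients bounded by $d^{d/2}N^d$. So $\gamma<1/((d+1)d^{d/2}N^d)$ forces $D(0)=0$, with no translation or Sherman--Morrison at all. Finally, your condition tying $\gamma$ to off-plane points is unnecessary. Once $w_1>0$, the defining affine function is larger at $x_i+\gamma e_1$ than at $x_i-\gamma e_1$, so no pair contributes $-1$.
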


\begin{proof}
    Let $X_n=\{x_1 , \ldots, x_n\} \subset \mathbb{Z}_N^d$ be an instance of \textsc{AffineDegeneracy}.
    We construct an instance of \textsc{MaxHalfspace} by splitting each point into two points.

    \subparagraph*{The Construction.}
    For each $i\in[n]$ we create one point $y_{i}=x_i+\gamma e_1 \in R$ and one point $z_i=x_i-\gamma e_1 \in B $ where $\gamma$ is sufficiently small (yet to be determined).
    Now, we claim that there exists a solution with $\mu_R |R|-\mu_B |B|\geq d+1 $ if and only if there exists a set $X' \subseteq X_n$ of $|X'|=d+1$ points that lie on a common $d$-dimensional hyperplane. 

    First note that if there exists a set $X' \subseteq X_n$ of $|X'|=d+1$ points that lie on a common $d$-dimensional hyperplane $h$, then this hyperplane is a solution to \textsc{MaxHalfspace} with $\mu_R |R|-\mu_B |B|\geq d+1$ because for each $x\in X'$, the corresponding point $y\in R$ lies above $h$, whereas $z\in B$ lies below $h$.

    For the other direction, assume that there exists a hyperplane $h$ with $\mu_R |R|-\mu_B |B|\geq d+1$.
    Then there must be a set $S \subseteq [n]$ of $|S|=d+1$ indices such that for any $i \in S$ this hyperplane crosses $x_i'=x_i+\lambda_i \gamma e_1$ for some $\lambda_i \in [-1, 1]$, i.e., it passes between $y_i\in R$ and $z_i\in B$ corresponding to $x_i$. Assume w.l.o.g. that $S=[d+1]$. It remains to show that $x_1, \ldots, x_{d+1}$ lie on a common hyperplane.
    The key will be leveraging that the coordinates of input points are integers to ensure that the solution to a satisfying halfspace must have $x_{d+1}$ very close. Setting $\gamma$ sufficiently small will imply the distance is actually $0$.

    \subparagraph*{Reframing the Problem in Linear Algebra.}
    We assume w.l.o.g. that $x_1, \ldots, x_d$ and $x_1', \ldots, x_d'$ are linearly independent, {otherwise the proof continues verbatim in the largest dimension $d'<d$ for which the assumption is true, without affecting the hardness in $d$}. Then $x_1', \ldots, x_d'$ lie on a common hyperplane and in particular there exists $b' \in \mathbb{R}^d$ such that for all $i \in [d]$, it holds that $\langle b', x_i'\rangle -1 =0 $.
    We set $X \in \mathbb{Z}^{d\times d}$ to be matrix whose $i$-th row equals $x_i^T$, for $i\in[d]$, and $X' \in \mathbb{R}^{d\times d}$ to be matrix whose $i$-th row equals $x_i'^T$, for $i\in[d]$. 
    Let $\boldsymbol{1} \in \mathbb{R}^d$ be the vector whose coordinates are all equal to one. 
    Then we have that $X' b'= \boldsymbol{1}$, which implies $ b'=X'^{-1}\boldsymbol{1}$. Similarly, we set $b=X^{-1}\boldsymbol{1}$.
    
    Now, we show that $x_1, \ldots, x_{d+1}$ lie on a common hyperplane by proving the following two claims, which clearly imply that $\langle b,x_{d+1} \rangle-1=0$:
    \begin{enumerate}
    \item[(i)] we have that $\langle b,x_{d+1} \rangle- 1$ is a multiple of $1/N_0$, for some integer $N_0\geq 1$, and
    \item[(ii)] it holds that
    \[
        |\langle b,x_{d+1} \rangle- 1|=|\langle b,x_{d+1} \rangle- \langle b',x'_{d+1} \rangle + \langle b',x'_{d+1} \rangle - 1 | =|\langle b,x_{d+1} \rangle-\langle b',x'_{d+1} \rangle| < 1/N_0.
    \]
    \end{enumerate}
    
    \subparagraph*{(i) Showing Integrality.}
    For the first claim (i), note that $\langle b,x_{d+1} \rangle = (X^{-1}\boldsymbol{1})^T x_{d+1} $. Since $X$ comprises only integer entries, $\det(X)$ is also an integer by Fact~\ref{F1:detinteger}. By Fact~\ref{F2:Hadamard}, we also have that $|\det(X)|\geq 1$.
    We know from Fact~\ref{F3:invdetadj} that $X^{-1} = \adj(X)/\det(X)$ and the entries of $\adj(X)$ are again integers by Fact~\ref{F4:adjinteger}. Finally, each coordinate $i\in [d]$ of $\adj(X)\boldsymbol{1}$ is a sum over the integer entries of $\adj(X)_i$. The vector $x_{d+1}$ again consists only of integers. Taking their dot product thus retains the property of being integers, i.e., $(\adj(X)\boldsymbol{1})^Tx_{d+1} \in \mathbb{Z}$.
    
    We set $N_0 \coloneqq |\det(X)|$ and conclude that there exists $k\in \mathbb{Z}$ such that $\langle b,x_{d+1} \rangle -1 = \frac{k}{N_0}$. 

    \subparagraph*{(ii) Controlling the Deviations.}
    For the second claim (ii), let $R=X^{-1} - X'^{-1}$. 
    To bound the remainder term $R$, we first express $X'$ as a rank-$1$ update of $X$ as follows. Let $u=\gamma \lambda$, and let $v = e_1$. Then $X'^{-1} = (X + uv^T)^{-1}$. 
    Plugging in the definitions of $u$ and $v$, the Sherman-Morrison formula yields
    \begin{align*}
        R=X^{-1} - X'^{-1} \overset{Fact~\ref{F5:shermanmorrison}}{=} \frac{X^{-1}uv^TX^{-1}}{1+v^TX^{-1}u}
                           = \frac{X^{-1}\lambda\gamma e_1^TX^{-1}}{1+e_1^TX^{-1}\gamma \lambda}
                           = \frac{X^{-1}\lambda\gamma e_1^TX^{-1}}{1+(X^{-1}\lambda)^T\gamma e_1}\,.
    \end{align*}
    We will see below that the denominator is non-zero, so the remainder term is well-defined.
    Now, recall that $X^{-1}\boldsymbol{1} = b$. Similarly, we set $\beta\coloneqq X^{-1}\lambda$. Using these substitutions, we get
    \begin{align}\label{eq:R1}
        R\boldsymbol{1} 
        = (X^{-1} - X'^{-1})\boldsymbol{1}
        = \left(\frac{X^{-1}\lambda\gamma e_1^TX^{-1}\boldsymbol{1}}{1+(X^{-1}\lambda)^T\gamma e_1}\right)
        = \left(\frac{\beta\gamma e_1^Tb}{1+\beta^T\gamma e_1}\right)
        = \left(\frac{\gamma b_1}{1+\gamma \beta_1} \right) \beta\,.
    \end{align}
    We choose $\gamma \coloneqq (6d^{d+2}N^{2d})^{-1}>0$. 

    In particular, with our coordinate bound (\Cref{lem:H-det}) this implies that $\gamma |\beta_1| \ll 1/2$, such that the denominator in the remainder term of \Cref{eq:R1} is non-zero and bounded in the interval $[1/2,3/2]$.
    By Lemma~\ref{lem:H-det}, we can bound the coordinates of $\beta$ as $|\beta_i| = |(X^{-1} \lambda)_i| \leq \frac{1}{|\det(X)|} \cdot d^{(d+1)/2} N^{d-1}$. Since $\boldsymbol{1} \in [-1,1]^d$ the same bound holds for all coordinates $|b_i|$ of $b$.  
    These arguments yield the following bound
    \begin{align}\label{eq:helper1}
        \left|(R\boldsymbol{1})^T x_{d+1} \right| 
        &\overset{Eq.~\eqref{eq:R1}}{=} \left|\left(\frac{\gamma b_1}{1+\gamma \beta_1} \right) \beta^T x_{d+1} \right|
        \overset{Hoelder}{\leq} \gamma \left|\left(\frac{1}{1+\gamma \beta_1} \right)\right| |b_1| \|\beta\|_\infty \|x_{d+1}\|_1 \nonumber\\
        &\overset{Lem.~\ref{lem:H-det}}{\leq} \gamma\left|\left(\frac{1}{1+\gamma \beta_1} \right)\right| \cdot \frac{1}{|\det(X)|^2} \cdot d^{d+1} N^{2d-2} \cdot dN \nonumber\\
        &\quad\!<\quad 2\gamma \cdot\frac{1}{|\det(X)|} \cdot d^{d+2} N^{2d-1}
        < \frac{1}{3\cdot|\det(X)|} \,.
    \end{align}
    With almost the same calculation, we also have that
    \begin{align}\label{eq:helper2}
        \left|(R\boldsymbol{1})^T e_{1} \right| 
        &\overset{Eq.~\eqref{eq:R1}}{=} \left|\left(\frac{\gamma b_1}{1+\gamma \beta_1} \right) \beta_1 \right|
        \leq \gamma \left|\left(\frac{1}{1+\gamma \beta_1} \right)\right| |b_1| |\beta_1| \nonumber\\
        &\overset{Lem.~\ref{lem:H-det}}{\leq} \gamma\left|\left(\frac{1}{1+\gamma \beta_1} \right)\right| \cdot \frac{1}{|\det(X)|^2} \cdot d^{d+1} N^{2d-2} \nonumber\\
        &\quad\!<\quad 2\gamma \cdot\frac{1}{|\det(X)|} \cdot d^{d+1} N^{2d-2}
        < \frac{1}{3\cdot|\det(X)|} \,.
    \end{align}
    Now, by combining the triangle inequality, \Cref{eq:helper2}, Lemma \ref{lem:H-det}, and our choice of $\gamma =(6d^{d+2}N^{2d})^{-1} \leq 1$, we get that 
    \begin{align}\label{eq:helper3}
        \gamma\left|(R\boldsymbol{1})^T e_1 - (X^{-1}\boldsymbol{1})^Te_1 \right|
        &\leq \gamma|(R\boldsymbol{1})^T e_1| + \gamma|b_1| \nonumber \\
        &< \frac{1}{3\cdot|\det(X)|} + \frac{1}{6\cdot|\det(X)|} = \frac{1}{2\cdot|\det(X)|}\,.
    \end{align}
    We conclude the second claim (ii) as follows
    \begin{align*}
        &| \langle b, x_{d+1} \rangle-\langle b', x'_{d+1} \rangle | \\
        &=|(X^{-1}\boldsymbol{1})^T x_{d+1}- (X'^{-1}\boldsymbol{1})^T x'_{d+1} | \\
        &=|(X^{-1}\boldsymbol{1})^T x_{d+1}- (X'^{-1}\boldsymbol{1}-X^{-1}\boldsymbol{1}+X^{-1}\boldsymbol{1})^T x'_{d+1} | \\
        &=|(X^{-1}\boldsymbol{1})^T x_{d+1}- (X^{-1}\boldsymbol{1})^T x'_{d+1}+ (X^{-1}\boldsymbol{1}-X'^{-1}\boldsymbol{1})^T x'_{d+1} | \\
        &=|(X^{-1}\boldsymbol{1})^T x_{d+1} -(X^{-1}\boldsymbol{1})^T x_{d+1} - \gamma \lambda_{d+1} (X^{-1}\boldsymbol{1})^Te_1 + (R\boldsymbol{1})^T x'_{d+1} | \\
        &=|(R\boldsymbol{1})^T x_{d+1} + \gamma \lambda_{d+1}(R\boldsymbol{1})^T e_1 - \gamma \lambda_{d+1}(X^{-1}\boldsymbol{1})^Te_1 | \\
        &\leq |(R\boldsymbol{1})^T x_{d+1}| + \gamma |\lambda_{d+1}| |(R\boldsymbol{1})^T e_1 - (X^{-1}\boldsymbol{1})^Te_1 | \\
        &< \frac{1}{3\cdot|\det(X)|} + \frac{1}{2\cdot|\det(X)|} < \frac{1}{|\det(X)|}\,,
    \end{align*}
    where the triangle inequality is followed by using the fact that $|\lambda_{d+1}|\leq 1$ and applying \Cref{eq:helper1,eq:helper3}.

    \subparagraph*{Final Reductions.}
    Finally, we set $m = n/2$ and $\eps=1/n$.
    Then, any algorithm that solves \MaxH on $R$ and $B$ in time $O(m^{d-c}) = O((n/2)^{d-c})$ time or 
    $\eps$-\MaxH in $O(\eps^{-(d-c)})$ time can also be used to solve \textsc{AffineDegeneracy} in time $O(n^{d-c})$.
\end{proof}

\bibliography{references}

\end{document}